%
\documentclass[runningheads]{llncs}
\usepackage[width=1.03\textwidth,height=1.01\textheight,vmarginratio=3:3,hcentering]{geometry}


\usepackage[utf8]{inputenc}
\usepackage[T1]{fontenc}

\usepackage{graphicx}
%

\usepackage{amsmath}
\usepackage{amssymb}
\usepackage{stmaryrd}
\usepackage{xspace}
\usepackage[disable]{todonotes}
\usepackage{wrapfig}

\newcommand{\tr}{\textrm{\textit{tt}}\xspace}
\newcommand{\fa}{\textrm{\textit{ff}}\xspace}

\usepackage{xcolor}
\usepackage[hidelinks]{hyperref}

\usepackage{listings}

\usepackage{stmaryrd} 
\usepackage{utf8-math}   

\xdefinecolor{maincolor}{RGB}{0, 120, 140}
\colorlet{alertedcolor}{purple}
\colorlet{examplecolor}{green!50!black}

\usepackage{paralist}

\lstset{
  basicstyle=\ttfamily,
  commentstyle=\itshape,
  keywordstyle=\bfseries
}

\lstdefinestyle{pseudo}{language={},
  basicstyle=\normalfont,
  morecomment=[l]{//},
  morecomment=[s]{/*}{*/},
  morekeywords={for,to,while,do,if,then,else,each,end,Input,Output},
  mathescape=true,
  columns=fullflexible
}

\usepackage{tikz}
\usepackage{tkz-base}
\usepackage{tkz-euclide}
\usetikzlibrary{shapes.misc,calc,decorations,decorations.pathreplacing,backgrounds,shapes.multipart,fit,positioning}

\usepackage{xspace}

\usepackage{booktabs}


\usetikzlibrary{arrows.meta,plotmarks}
\tikzset{
	event/.style={
		draw,
		inner sep=.5pt,
		circle,
		minimum width=6pt
	},
	events/.style={
		font=\tiny,
		xscale=.5,
		yscale=.3
	},
	unit/.style={
		event,
		path picture={ 
			\draw[black]
			(path picture bounding box.south east) -- (path picture bounding box.north west)
			(path picture bounding box.south west) -- (path picture bounding box.north east);
		}
	}
}

\usetikzlibrary{%
  bending,%
  decorations.markings,%
  decorations.pathmorphing}

\makeatletter
\newcounter{streamdiagram@y}
\newenvironment{streamdiagram}[1][]{
  \tikzpicture[streamdiagram,every legend/.style={},#1]
  \setcounter{streamdiagram@y}{0}
  \newcommand{\y}{-\value{streamdiagram@y}}
  \newcommand{\down}[1][1]{\addtocounter{streamdiagram@y}{##1}\renewcommand{\y}{-\value{streamdiagram@y}}}
  \newcommand{\x}[1]{(##1,\y)}
  \newcommand{\legend}[2][]{\down\node[left,every legend,##1] at \x0 {##2};}
  
}{
  \endtikzpicture
}
\makeatother
\pgfdeclarelayer{background}
\pgfsetlayers{background,main}
\pgfkeys{%
  /tikz/on layer/.code={
    \pgfonlayer{#1}\begingroup
    \aftergroup\endpgfonlayer
    \aftergroup\endgroup
  },
}
\tikzset{
  streamdiagram/.style={
    every label/.style={
      inner sep=2pt,
      line width=0pt
    },
    event/.style={
      circle,
      minimum height=10pt,
      minimum width=10pt,
      inner sep=-2pt,
      fill=##1
    },
    event/.default=red!20,
    caption/.style={
      append after command={
        (\tikzlastnode.center) node[font=\scriptsize] {##1}
      }
    }
  }
}

\usepackage{cleveref}

\usepackage{array}
\usepackage{ltlfonts}
\usepackage[ruled,vlined,linesnumbered]{algorithm2e}
\usepackage{arydshln}
\usepackage{wrapfig}

\newcommand{\Spec}{
  \begin{streamdiagram}[yscale=.45,xscale=1]
  \tt
  \legend{In $\LOAD$: Real}

  \legend{Def $\ACC := \ACC[-1|0] + \LOAD[\NOW] - \LOAD[-3|0]$}
  
  \legend{Def $\OK := (\ACC[\NOW] \leq 15)$}
  
\end{streamdiagram}
}

\newcommand{\ExOneLola}{
  \begin{streamdiagram}[yscale=.45,xscale=1]
  \tt
  \legend{}
  \draw[|->, gray] \x0 -- \x5;
  \node[event=blue!20, caption=3] at \x1 {};
  \node[event=blue!20, caption=4] at \x2 {};
  \node[event=blue!20, caption=5] at \x3 {};
  \node[event=blue!20, caption=7] at \x4 {};

  \legend{}
  \draw[|->, gray] \x0 -- \x5;
  \node[event=blue!20, caption=3] at \x1 {};
  \node[event=blue!20, caption=7] at \x2 {};
  \node[event=blue!20, caption=12] at \x3 {};
  \node[event=blue!20, caption=16] at \x4 {};
  
  \legend{}
  \draw[|->, gray] \x0 -- \x5;
  \node[event=green!20, caption=\tr] at \x1 {};
  \node[event=green!20, caption=\tr] at \x2 {};
  \node[event=green!20, caption=\tr] at \x3 {};
  \node[event=red!20, caption=\fa] at \x4 {};
  
\end{streamdiagram}
}

\newcommand{\ExOneAbstractInterpretation}{
\begin{streamdiagram}[yscale=.45,xscale=1]
  \tt
  \legend{}
  \draw[|->, gray] \x0 -- \x5;
  \node[event=blue!20, caption={[1,5]}] at \x1 {};
  \node[event=blue!20, caption=4] at \x2 {};
  \node[event=blue!20, caption=5] at \x3 {};
  \node[event=blue!20, caption=7] at \x4 {};

  \legend{}
  \draw[|->, gray] \x0 -- \x5;
  \node[event=blue!20, caption={[1,5]}] at \x1 {};
  \node[event=blue!20, caption={[5,9]}] at \x2 {};
  \node[event=blue!20, caption={[10,14]}] at \x3 {};
  \node[event=blue!20, caption={[12,20]}] at \x4 {};
  
  \legend{}
  \draw[|->, gray] \x0 -- \x5;
  \node[event=green!20, caption=\tr] at \x1 {};
  \node[event=green!20, caption=\tr] at \x2 {};
  \node[event=green!20, caption=\tr] at \x3 {};
  \node[event=orange!20, caption=?] at \x4 {};
  
\end{streamdiagram}
}
\newcommand{\ExOneSymbolic}{
\begin{streamdiagram}[yscale=.45,xscale=1]
  \tt
  \legend{}
  \draw[|->, gray] \x0 -- \x5;
  \node[event=blue!20, caption={$\LOAD^0$}] at \x1 {};
  \node[event=blue!20, caption=4] at \x2 {};
  \node[event=blue!20, caption=5] at \x3 {};
  \node[event=blue!20, caption=7] at \x4 {};

  \legend{}
  \draw[|->, gray] \x0 -- \x5;
  \node[event=blue!20, caption=$\LOAD^0$] at \x1 {};
  \node[event=blue!20, caption=$\LOAD^0\!\!+\!\!4$] at \x2 {};
  \node[event=blue!20, caption=$\LOAD{}^0\!\!+\!\!9$] at \x3 {};
  \node[event=blue!20, caption=$16$] at \x4 {};
  
  \legend{}
  \draw[|->, gray] \x0 -- \x5;
  \node[event=green!20, caption=\tr] at \x1 {};
  \node[event=green!20, caption=\tr] at \x2 {};
  \node[event=green!20, caption=\tr] at \x3 {};
  \node[event=red!20, caption=\fa] at \x4 {};
  
\end{streamdiagram}
}

\newcommand{\todoml}[1]{\todo[linecolor=cyan,backgroundcolor=cyan!25,bordercolor=cyan]{ML@all: {#1}}}


\newcommand{\R}{\mathbin{\mathcal{R}}}

\newcommand{\Time}{\ensuremath{\mathbb{T}}}
\newcommand{\KWD}[1]{\ensuremath{\textrm{\textit{#1}}}}
\newcommand{\LOAD}{\KWD{ld}}
\newcommand{\ACC}{\KWD{acc}}
\newcommand{\USR}{\KWD{usr}}

\newcommand{\TOTAL}{\KWD{total}}
\newcommand{\OK}{\KWD{ok}}
\newcommand{\NOW}{\KWD{now}}
\newcommand{\BAR}[1]{\ensuremath{\overline{#1}}}
\newcommand{\DD}{\mathbb{D}}
\newcommand{\Dbar}{\BAR{\DD}}
\newcommand{\BB}{\mathbb{B}}
\newcommand{\Bbar}{\BAR{\BB}}

\newcommand{\Expr}{\ensuremath{\textit{Expr}}}

\renewcommand{\And}{\mathrel{\wedge}}

\newcommand{\Into}{\mathrel{\rightarrow}}

\newcommand{\calA}{\mathcal{A}}
\newcommand{\sem}[1]{\llbracket #1 \rrbracket}
\newcommand{\LolaName}{\textrm{Lola}}
\newcommand{\Lola}{\LolaName\xspace}
\newcommand{\LolaLA}{\ensuremath{\LolaName_{\mathcal{LA}}}\xspace}
\newcommand{\LolaB}{\ensuremath{\LolaName_{\mathbb{B}}}\xspace}
\newcommand{\LolaBLA}{\ensuremath{\LolaName_{\mathbb{B}/\mathcal{LA}}}\xspace}

\newcommand{\TRUE}{\ensuremath{\textrm{\textit{tt}}}}
\newcommand{\FALSE}{\ensuremath{\textrm{\textit{ff}}}}

\newcommand{\hannestodo}[1]{\todo[linecolor=red,backgroundcolor=red!25,bordercolor=red]{Hannes: #1}}

\def\orcidID#1{\smash{\href{http://orcid.org/#1}{\protect\raisebox{-1.25pt}{\protect\includegraphics{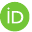}}}}}

\newcommand{\Bool}{\mathbb{B}}
\newcommand{\Real}{\mathbb{R}}

\newcommand{\semSym}[1]{\sem{#1}_\textit{sym}}

\newcommand{\semDen}[1]{\sem{#1}_\textit{den}}

\newcommand{\ite}{\ensuremath{\textit{ite}}}

\newcommand{\Strm}[1]{\mathcal{S}_{#1}}
\newcommand{\Done}{\DD_1}
\newcommand{\DoneP}{\DD'_1}
\newcommand{\Dn}{\DD_n}
\newcommand{\DmP}{\DD'_m}
\newcommand{\calC}{\mathcal{C}}
\newcommand{\calP}{\mathcal{P}}
\newcommand{\semR}[1]{\sem{#1}_{\R}}

\newcommand{\calR}{\mathcal{R}}
\newcommand{\calV}{\mathcal{V}}

\begin{document}
\newcommand{\Thanks}{}

\title{Symbolic Runtime Verification for Monitoring under Uncertainties and Assumptions\Thanks}

\titlerunning{Symbolic Runtime Verification under Uncertainties and Assumptions}

\newcommand{\orcidHannes}{\orcidID{0000-0002-8301-4752}}
\newcommand{\orcidMartin} {\orcidID{0000-0002-3696-9222}}
\newcommand{\orcidCesar} {\orcidID{0000-0003-3927-4773}}


%
%

\author{Hannes Kallwies\inst{1}\orcidHannes \and
  Martin Leucker\inst{1}\orcidMartin \and
  C\'esar S\'anchez\inst{2}\orcidCesar}
\institute{
  University of L\"ubeck, L\"ubeck, Germany
  \and
  IMDEA Software Institute, Madrid, Spain
}
%
%
\maketitle              

\begin{abstract}
  Runtime Verification deals with the question of whether a run of a
  system adheres to its specification.
  This paper studies runtime verification in the presence of partial
  knowledge about the observed run, particularly where input values
  may not be precise or may not be observed at all.
  We also allow declaring assumptions on the execution which permits to
  obtain more precise verdicts also under imprecise inputs.
  To this end, we show how to understand a given correctness property
  as a symbolic formula and explain that monitoring boils down to
  solving this formula iteratively, whenever more and more
  observations of the run are given.
  We base our framework on stream runtime verification, which allows
  to express temporal correctness properties not only in the Boolean
  but also in richer logical theories.
  While in general our approach requires to consider larger and larger
  sets of formulas, we identify domains (including Booleans and Linear
  Algebra) for which pruning strategies exist, which allows to monitor
  with constant memory (i.e. independent of the length of the
  observation) while preserving the same inference power as the
  monitor that remembers all observations.
  We empirically exhibit the power of our technique using a prototype
  implementation under two important cases studies: software for
  testing car emissions and heart-rate monitoring.
  %

\end{abstract}


%

\section{Introduction}
\label{sec:introduction}

In this paper we study runtime verification (RV) for imprecise and
erroneous inputs, and describe a solution---called \emph{symbolic
  monitoring}---that can exploit assumptions about the input and
the system under analysis.
Runtime verification is a dynamic verification technique in which
a given run of a system is checked against a specification, typically
a correctness property~(see
\cite{havelund05verify,leucker09brief,bartocci18lectures}).
In \emph{online monitoring} a monitor is synthesized from the given
correctness property, which attempts to produce a verdict
incrementally from the input trace.
Originally, variants of LTL~\cite{Pnueli77} tailored to finite runs
have been employed in RV to formulate properties (see
\cite{DBLP:journals/logcom/BauerLS10} for a comparison on such
logics).
However, since RV requires to solve a variation of the word problem
and not the harder model-checking problem, richer logics than LTL have
been proposed that allow richer data and
verdicts~\cite{DBLP:journals/sttt/DeckerLT16,DBLP:journals/sttt/HavelundP21}.
\Lola~\cite{DAngeloSSRFSMM05} proposes \emph{stream runtime
  verification} (SRV) where monitors are described declaratively and
compute output streams of verdicts from inputs streams (see
also~\cite{DBLP:conf/sac/LeuckerSS0S18,DBLP:journals/sttt/GorostiagaS21}).
The development of this paper is based on \Lola.

\begin{figure}[t!]
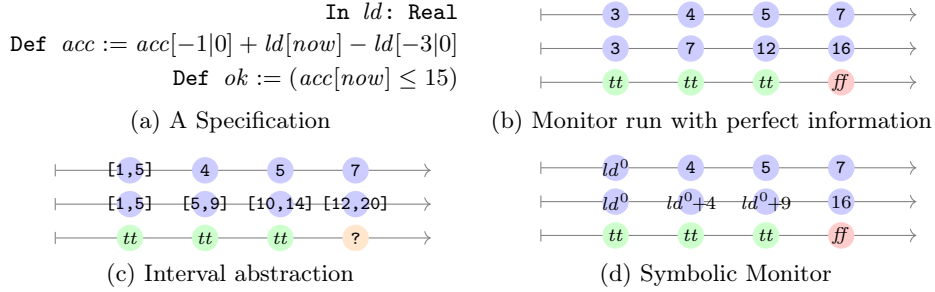

  \begin{tabular}{c@{\hspace{1em}}c}
    \Spec & \ExOneLola \\
    (a) A Specification & (b) Monitor run with perfect information \\[0.6em]
    \ExOneAbstractInterpretation & \ExOneSymbolic \\
    (c) Interval abstraction & (d) Symbolic Monitor \\
  \end{tabular}
  \label{fig:exone}
  \caption{An example specification (a) and three monitors: (b) with
    perfect observability, (c) with an interval abstract domain, (d) a
    symbolic monitor developed in this paper. The symbolic monitor is enriched with
    the additional constraint that $1 \leq \LOAD^0 \leq 5$.}
\end{figure}

\begin{example}
  \label{ex:one}
  Fig.~\ref{fig:exone}(a) shows a \Lola specification with $\LOAD$ as
  input stream (the load of a CPU), $\ACC$ as an output stream that
  represents the accumulated load, computed by adding the current
  value of $\LOAD$ and subtracting the third last value.
  Finally, $\OK$ checks whether $\ACC$ is below $15$.
  The expression $\ACC[-1|0]$ denotes the value of $\ACC$ in the
  previous time point and $0$ as default value if no previous time
  point exists.

  Such a specification allows a direct evaluation strategy whenever
  values on the input streams arrive.
  If, for example, $\LOAD=3$ in the first instant, $\ACC$ and $\OK$
  evaluate to $3$ and $\tr$, respectively.
  Reading subsequently $4, 5, 7$ results in $7,12, 16$ for $\ACC$ and a
  violation is identified on stream $\OK$.
  %
  %
  This is shown in Fig.~\ref{fig:exone}(b).

  A common obstacle in runtime verification is that in practice
  sometimes input values are not available or not given precisely, due
  to errors in the underlying logging functionality or technical
  limitations of sensors.
  In Fig.~\ref{fig:exone}(c) the first value on $\LOAD$ is not
  obtained (but we assume that all values of $\LOAD$ are between $1$
  and $5$).
  One approach is to use interval arithmetic, which can be easily
  encoded as a rich domain in \Lola, and continue the computation
  when obtaining $4$, $5$ and $7$.
  However, at time $4$ the monitor cannot know for sure whether $\OK$
  has been violated, as the interval $[12,20]$ contains $15$.
  This approach based on abstract interpretation~\cite{CousotC77}
  was pursued in \cite{DBLP:conf/rv/LeuckerSS0T19} and suffers from
  this limitation.
  If the unknown input on $\LOAD$ is denoted symbolically by $\LOAD^0$
  %
  we still deduce that $\OK$ holds at time points $1$ to $3$.
  For time point $4$, however, the symbolic representation
  $\ACC^4 = \ACC^3 + 7 - \LOAD^0 = \LOAD^0 + 9 + 7 - \LOAD^0 = 16$
  allows to infer that $\OK$ is clearly violated!
  This is shown in Fig.~\ref{fig:exone}(d).\qed
\end{example}

%
 %

\noindent Example~\ref{ex:one} illustrates the first insight pursued in
this paper: that \emph{Symbolic monitoring} is more precise than
monitoring using abstract domains.

Clearly, an infinite symbolic unfolding of a specification and all the
assumptions with subsequent deduction is practically
infeasible.
Therefore, we perform online monitoring unfolding the specification as
time increases.
We show that a monitor based on deducing verdicts using this partial
symbolic unfolding is both \emph{sound} and \emph{perfect} in the
sense that the monitor only produces correct verdicts and these are as
precise verdicts as possible with the information provided.
Symbolic monitoring done in this straightforward way, however, comes
at a price: the unfolded specification grows as more unknowns and
their inter-dependencies become part of the symbolic unfolding.
For example, in the run in Fig.~\ref{fig:exone}(d) as more unknown
$\LOAD$ values are received, more variables $\LOAD^i$ will be
introduced, which may make the size of the symbolic formula dependent on
the trace length.
We show that for certain logical theories, the current verdict may be
still be computed even after summarizing the history into a compact
symbolic representation, whose size is independent of the trace
length.
For other theories, preserving the full precision of symbolic
monitoring requires an amount of memory that can grow with the trace
length.
More precisely, we show that for the theories of Booleans and of
Linear Algebra, bounded symbolic monitors exist while this is not the
case for the combined theory, which is the second insight presented in this paper.

We have evaluated our approach on two realistic cases studies, the Legal
Driving Cycle \cite{KohlHB18,DBLP:conf/tacas/BiewerFHKSS21} and an ECG
heartbeat analysis (following the \Lola implementation
in~\cite{GorostiagaS21}, see also~\cite{pan85qrs,sznajder17qrs}) which
empirically validate our symbolic monitoring approach, including
constant monitoring on long traces.
When intervals are given for unknown values, our method provides
precise answers more often than previous approaches based on interval
domains~\cite{DBLP:conf/rv/LeuckerSS0T19}.
Especially in the ECG example, these methods are unable to recover
once the input is unknown for even a short time, but our symbolic
monitors recover and provide again precise results, even when the
input was unknown for a larger period.

\paragraph{Related Work.}
Monitoring LTL for traces with mutations (errors) is studied
in~\cite{kauffman20channels} where properties are classified according
to whether monitors can be built that are resilient against the
mutation.
However,~\cite{kauffman20channels} only considers Boolean verdicts and
does not consider assumptions.
The work in~\cite{DBLP:conf/rv/LeuckerSS0T19} uses abstract
interpretation to soundly approximate the possible verdict values when
inputs contain errors for the SRV language
TeSSLa~\cite{ConventHLS0T18}.

Calculating and approximating the values that programs compute is
central to static analysis and program verification.
Two traditional approaches are symbolic execution~\cite{King76} and
abstract interpretation~\cite{CousotC77} which frequently require
over-approximations to handle loops.
In monitoring, a step typically does not contain loops, but the set of
input variables (unlike in program analysis) grows.
Also, a main concern of RV is to investigate monitoring algorithms
that are guaranteed to execute with constant resources.
Works that incorporate assumptions when monitoring include
\cite{DBLP:conf/rv/HenzingerS20,DBLP:conf/rv/CimattiTT21,DBLP:conf/rv/Leucker12}
but uncertainty is not considered in these works, and verdicts are
typically Boolean.
Note that bounded model checking~\cite{BiereCCSZ03} also considers
bounded unfoldings, but it does not solve the problem of building
monitors of constant memory for successive iterations.

%
%
%
%
%

In summary, our contributions are:
\begin{inparaenum}[(1)]
\item A symbolic monitoring algorithm that dynamically unfolds the
  specification, collects precise and imprecise input readings, and
  instantiates assumptions generating a conjunction of formulas.  This
  representation can be used to deduce verdicts even under
  uncertainty, to precisely recover automatically for example under
  windows of uncertainty, and even to anticipate verdicts.
\item A pruning method for certain theories (Booleans and Linear
  Algebra) that guarantees bounded monitoring preserving the power
  to compute verdicts.
\item A prototype implementation and empirical evaluation on realistic case studies.
\end{inparaenum}
%

All missing proofs appear in the appendix.

\section{Preliminaries}
\label{sec:preliminaries}

We use \Lola~\cite{DAngeloSSRFSMM05} to express our monitors.
%
%
\Lola uses first-order sorted theories to build expressions.
These theories are interpreted in the sense that every symbol is a
both constructor to build expressions, and an evaluation function that
produces values from the domain of results from values from the
domains of the arguments.
All sorts of all theories that we consider include the $=$ predicate.

A synchronous stream $s$ over a non-empty data domain $\DD$ is a
function $s: \Strm{\DD} := \mathbb{T} \to \DD$ assigning a value of
$\DD$ to every element of $\mathbb{T}$ (timestamp).
We consider infinite streams ($\mathbb{T} = \mathbb{N}$) or finite
streams with a maximal timestamp $t_{\textrm{max}}$
($\mathbb{T} = [0\ldots{}t_{\textrm{max}}]$).
For readability we denote streams as sequences, so
$s = \langle 1,2,4 \rangle$ stands for
$s: \{1,2,3\} \rightarrow \mathbb{N}$ with
$s(0) = 1, s(1) = 2, s(2) = 4$.
A \Lola specification describes a transformation from a set of input
streams to a set of output streams.

\paragraph{Syntax.}

A \Lola specification $\varphi = (I,O,E)$ consists of a set $I$ of typed
variables that denote the input streams, a set $O$ of typed
variables that denote the output steams, and $E$ which assigns to
every output stream variable $y \in O$ a defining expression $E_y$.
%
The set of expressions over $I\cup O$ of type $\DD$ is denoted by
$E_\DD$ and is recursively defined as:
\( E_\DD = c \mid s[o|c] \mid f(E_{\DD_1}, ..., E_{\DD_n}) \mid ite(E_\mathbb{B}, E_\DD, E_\DD)
\),
where $c$ is a constant of type $\DD$, $s \in I\cup O$ is a
stream variable of type $\DD$, $o \in \mathbb{Z}$ is an offset and $f$
a total function
$\DD_1 \times \dots \times \DD_n \rightarrow
\DD$ ($\ite$ is a special function symbol to denote if-then-else).
The intended meaning of the offset operator $s[o|c]$ is to represent
the stream that has at time $t$ the value of stream $s$ at $t+o$, and
value $c$ used if $t+o\notin\Time$.
A particular case is when the offset is $o=0$ in which case $c$ is not
needed, which we shorten by $s[\NOW]$.
Function symbols allow to build terms that represent complex
expressions.
The intended meaning of the defining equation $E_y$ for output
variable $y$ is to declaratively define the values of stream $y$ in
terms of the values of other streams.

\paragraph{Semantics.}
The semantics of a \Lola specification $\varphi$ is a mapping from
input to output streams.
Given a tuple of concrete input streams
($\Sigma = (\sigma_1, \dots, \sigma_n) \in \Strm{\DD_1} \times \dots
\times \Strm{\DD_n}$) corresponding to input stream identifier
$s_1, \dots, s_n$ and a specification $\varphi$ the semantics of an
expression
$\sem{\cdot}_{\Sigma, \varphi}: E_\DD \rightarrow \Strm{\DD}$ is
iteratively defined as:
\begin{compactitem}
 \item $\sem{c}_{\Sigma, \varphi}(t) = c$
 \item $\sem{s[o|c]}_{\Sigma, \varphi}(t) = \begin{cases}
     \sigma_i(t + o) & \textrm{if } t + o \in \Time \textrm{ and } s = s_i \in I \textrm{ (input stream)}\\
     \sem{e}_{\Sigma, \varphi}(t + o) & \textrm{if } t + o \in \Time \textrm{ and }  E_s = e \textrm{ (output stream)}\\
     c & \textrm{otherwise}
   \end{cases}
$
\item
  $\sem{f(e_1,...,e_n)}_{\Sigma, \varphi}(t) = f(\sem{e_1}_{\Sigma, \varphi}(t),\ldots, \sem{e_n}_{\Sigma, \varphi}(t))$
 \item $\sem{ite(e_1, e_2, e_3)}_{\Sigma, \varphi}(t) = \begin{cases}
     \sem{e_2}_{\Sigma, \varphi}(t) & \textrm{if } \sem{e_1}_{\Sigma, \varphi}(t)  = \tr\\
   \sem{e_3}_{\Sigma, \varphi}(t) & \textrm{if } \sem{e_1}_{\Sigma, \varphi}(t)  = \fa
 \end{cases}$
\end{compactitem}

\noindent The semantics of $\varphi$ is a map
($\sem{\varphi}: (\Strm{\DD_1} \times \dots \times \Strm{\DD_n})
\rightarrow (\Strm{\DD'_1} \times \dots \times\Strm{\DD'_m}$) defined
as
\( \sem{\varphi} (\sigma_1,...,\sigma_n) = (\sem{e'_1}_{\Sigma, \varphi},\ldots, \sem{e'_m}_{\Sigma, \varphi})
\).
%
The evaluation map $\sem{\cdot}_{\Sigma,\varphi}$ is well-defined if
the recursive evaluation above has no cycles.
This acyclicity can be easily checked statically 
(see~\cite{DAngeloSSRFSMM05}).

In online monitoring monitors receive the values incrementally.
The \emph{very efficiently monitorable} fragment of \Lola consists of
specifications where all offsets are negative or 0 (without transitive
0 cycles).
%
%
It is well-known that the very efficiently monitorable specifications
(under perfect information) can be monitored online in a trace length
independent manner.
 In the rest of the paper we also assume that all \Lola specifications
 come with $-1$ or $0$ offsets.
 Every specification can be translated into such a normal form by
 introducing additional streams (flattening).


In this paper we investigate online monitoring under uncertainty for
three special fragments of \Lola, depending on the data theories used:
\begin{compactitem}
\item \textbf{Propositional Logic ($\LolaB$):} The
  data domain of all streams is the Boolean domain
  $\DD = \mathbb{B} = \{\tr, \fa\}$ and available functions are
  $\land, \neg$.
\item \textbf{Linear Algebra
    ($\LolaLA$):}
  The data domain of all streams are real numbers
  $\DD = \mathbb{R}$ and every stream definition has the form
  $c_0 + c_1 * s_0[o_1|d_1] + \dots + c_n * s_n[o_n|d_n]$ where $c_i$ are constants.
\item \textbf{Mixed ($\LolaBLA$):} The
  data domain is $\mathbb{B}$ or $\mathbb{R}$. Every stream definition
  is either contained in the Propositional Logic fragment extended by
  the functions $<, \leq, =$ or in the Linear Algebra fragment.
\end{compactitem}

\section{A Framework for Symbolic Monitoring}
\label{sec:framework}

In this section we introduce a general framework for monitoring using
symbolic computation, where the specification and the information
collected by the monitor (including assumptions and precise and
imprecise observations) are presented symbolically.

%
%

\subsection{Symbolic Expressions}
\label{subsec:exprt}

Consider a specification $\varphi=(I,O,E)$.
We will use symbolic expressions to capture the relations between the
different streams at different points in time.
We introduce the \emph{instant variables} $x^t$ for a given stream
variable $x\in I\cup{}O$ and instant $t\in\Time$.
The type of $x^t$ is that of $x$.
Considering Example~\ref{ex:one}, $\LOAD^3$ represents the real
value that corresponds to the input stream $\LOAD$ at instant
$3$ which is $7$.
The set of instant variables is $V=(I\cup O)\times \Time$.

\begin{definition}[Symbolic Expression]
  Let $\varphi$ be a specification and $\calA$ a set of variables that
  contains all instant variables (that is $V\subseteq \calA$), the set
  of symbolic expressions $\Dbar$ is the smallest set containing (1)
  all constants $c$ and all symbols in $a \in \calA$, (2) all
  expressions $f(t_1,\ldots,t_n)$ where $f$ is a constructor symbol
  of type $\DD_1 \times \dots \times \DD_n \rightarrow \DD$ and
  $t_i$ are elements of $\Dbar$ of type $\DD_i$.
\end{definition}

\noindent We use $\Expr^\DD_\varphi(\calA)$ for the set of symbolic expressions of type $\DD$
(and drop $\varphi$ and $\calA$ when it is clear from the context).

\begin{example}
  Consider again Example~\ref{ex:one}.
  The symbolic expression $\ACC^3+\LOAD^4$, of type $\Real$,
  represents the addition of the load at instant $4$ and the
  accumulator at instant $3$.
  Also, $\ACC^4=\ACC^3+\LOAD^4$ is a predicate (that is, a $\Bool$
  expression) that captures the value of $\ACC$ at instant $4$.
  %
  %
  The symbolic expression $\LOAD^1=4$ corresponds to the reading of
  the value $4$ for input stream $\LOAD$ at instant $1$.
  Finally, $1\leq\LOAD^0 \And \LOAD^0\leq 5$ corresponds to the
  assumption at time $0$ that $\LOAD$ has value between $1$ and
  $5$. \qed
\end{example}

\subsection{Symbolic Monitor Semantics}
\label{subsec:symsem}

We define the symbolic semantics of a \Lola specification
$\varphi=(I,O,E)$ as the expressions that result by instantiating the
defining equations $E$.

 \begin{definition}[Symbolic Monitor Semantics]
   \label{def:symmonsem}
   The map $\sem{\cdot}_\varphi:E_\DD\Into\mathbb{T}\Into\Expr^\DD_\varphi$ is
   defined as $\sem{c}_\varphi(t)=c$ for constants, and
   \begin{compactitem}
   \item 
     $\sem{f(e_1,\ldots,e_n)}_\varphi(t) =
     f(\sem{e_1}_\varphi(t),\ldots, \sem{e_n}_\varphi(t))$
   \item $\sem{ s[o|c]}_\varphi(t) = s^{t+o}$ if
     $t + o \in \mathbb{T}$, or $\sem{ s[o|c]}_\varphi(t) = c$   otherwise.
   \end{compactitem}
   The symbolic semantics of a specification $\varphi$ is the map
   $\semSym{\cdot}:\Time\Into2^{\Expr_\varphi^\Bool}$ defined as
   \( \semSym{\varphi}^t=\{ y^t=\sem{E_y}_\varphi(t) \;|\; \text{ for
     every $y\in O$}\}.  \)
 \end{definition}


 A slight modification of the symbolic semantics allows to obtain
 equations whose right hand sides only have input instant variables:
 \begin{compactitem}
 \item  $\sem{ s[o|c]}_\varphi(t) = s^{t+o}$ \hspace{2.7em}if  $t + o \in \mathbb{T}$ and $s\in I$
 \item $\sem{ s[o|c]}_\varphi(t) = \sem{e_s}(t+o)$ if  $t + o \in \mathbb{T}$  and $s\in O$
 \item $\sem{ s[o|c]}_\varphi(t)=c$ otherwise
 \end{compactitem}
 %
We call this semantics the symbolic unrolled semantics, which
corresponds to what would be obtained by performing equational
reasoning (by equational substitution) in the symbolic semantics.
\begin{example}
  \label{ex:symsem}
  Consider again the specification $\varphi$ in Example~\ref{ex:one}.
  The first four elements of $\semSym{\varphi}$ are (after
  simplifications like $0+x = x$ etc.):

  \noindent%
  \[
    \footnotesize
    \begin{array}{|c|c|c|c|}\hline
    0 & 1 & 2 & 3 \\ \hline
    \ACC^0=\LOAD^0 & \ACC^1=\ACC^0+\LOAD^1 &\ACC^2=\ACC^1+\LOAD^2 &\ACC^3=\ACC^2+\LOAD^3-\LOAD^0 \\
      \OK^0=\ACC^0\leq 15 & \OK^1=\ACC^1\leq 15 & \OK^2=\ACC^2\leq 15 & \OK^3=\ACC^3\leq 15 \\ \hline
  \end{array}
\]

Using the unrolled semantics the equations for $\OK$ would be, at time
$0$, $\OK^0=\LOAD^0\leq 15$, and at time $1$, $\OK^1=\LOAD^0 + \LOAD^1\leq 15$.
In the unrolled semantics all equations contain only instant
variables that represent inputs.\qed
\end{example}

Recall that the denotational semantics of \Lola monitor specifications
in Section~\ref{sec:preliminaries} maps every tuple of input streams
into a tuple of output streams, that is
\(
  \sem{\varphi}:\Strm{\Done}\times\dots\times\Strm{\Dn}\Into\Strm{\DoneP}\times\dots\times\Strm{\DmP}.
\)
The symbolic semantics also has a denotational meaning even without
receiving the input stream, defined as follows.

%
\begin{definition}[Denotational semantics]
  Let $\varphi=(I,O,E)$ be a specification with $I = (x_1,\ldots,x_n)$
  and $O = (y_1,\ldots,y_m)$.  The denotational semantics of a set of
  equations $E\subseteq\Expr^\Bool_\varphi$
  $\semDen{E}\subseteq\Strm{\Done}\times\dots\times\Strm{\Dn}\times\Strm{\DoneP}\times\dots\times\Strm{\DmP}$
  is:
\begin{align*}
  \semDen{E} & =\{ (\sigma_1,\ldots,\sigma_n,\sigma_1',\ldots,\sigma'_m)\;|\; \text{ for every $e\in E$ } \\
  & \hspace{3em}\{x_1^t=\sigma_1(t), \ldots, x_n^t=\sigma_n(t),y_1^t=\sigma'_1(t),\ldots,y_m^t=\sigma'_m(t)\} \models e \}
\end{align*}
\end{definition}

\todo{What is the meaning of this for
  $|\mathbb{T}|=\infty$?}\todoml{$t$ is always a finite value and for
  any $t$ it should work. If we have a continuous setting, we may say
  something regarding the fixpoint $t \to \infty$ but we don't care
  here. In other words: Yes, we do not discuss that the semantics for
  $\infty$ is say $\tr$ but the monitor will always say $\fa$. We do
  not support full anticipation. Cesar: Idea for future paper.}

\noindent Using the previous definition,
$\semDen{\bigcup_{i\leq t}\semSym{\varphi}^i}$ corresponds to all
the tuples of streams of inputs and outputs that satisfy the
specification $\varphi$ up to time $t$.



\subsubsection{A Symbolic Encoding of Inputs, Constraints and Assumptions.}
Input readings can also be defined symbolically as follows.
Given an instant $t$, an input stream variable $x$ and a value $v$,
the expression $x^t=v$ captures the precise reading of $v$ at 
$t$ on $x$.
Imprecise readings can also be encoded easily.
For example, if at instant $3$ an input of value $7$ for $\LOAD$ is
received by a noisy sensor (consider a $1$ unit of tolerance), then
$6\leq\LOAD^3\leq 8$ represents the imprecise reading.

Assumptions are relations between the variables that we assume to hold
at all positions, which can be encoded as stream expressions of type
$\Bool$.
For example, the assumption that the load is always between $1$ and
$10$ is $1\leq\LOAD[\NOW]\leq 10$.
Another example, $\LOAD[-1|0]+1\geq\LOAD[\NOW]$ which encodes that
$\LOAD$ cannot increase more than $1$ per unit of time.
We use $A$ for the set of assumptions associated with a \Lola
specification $\varphi$ (which are a set of stream expressions of type
$\Bool$ over $I\cup O$).

\subsection{A Symbolic Monitoring Algorithm}



%
Based on the previous definitions we develop our symbolic monitoring
algorithm shown in Alg.~\ref{alg:symonline}.
Line~3 instantiates the new equations and assumptions from the
specification for time $t$.
Line~4 incorporates the readings (perfect or imperfect).
Line~5 performs evaluations and simplifications, which is dependent on
the particular theory.  In the case of past-specifications with
perfect information this step boils down to substitution and evaluation. Line~$6$
produces the output of the monitor.
\renewcommand{\algorithmcfname}{Alg.}
\begin{wrapfigure}[11]{l}{0.54\textwidth}
  \begin{minipage}{0.54\textwidth}
    \vspace{-2.3em}
    \begin{algorithm}[H] 
      
      $t\leftarrow 0$ and $E\leftarrow \emptyset$\;
      \While{$t\in \Time$}{
        $E\leftarrow E\cup \semSym{\varphi}^t \cup \sem{A^t}_\varphi$\;
        \mbox{$E\leftarrow E\cup \{ x^t=v \;|\; \text{ for inputs $x$} \}$}\;
        Evaluate and Simplify\;
        Output\;
        Prune\;
        $t\leftarrow t + 1$ \;
      }
  \caption{\mbox{Online Symbolic Monitor for $\varphi$}}
  \label{alg:symonline}
\end{algorithm}
\end{minipage}
\end{wrapfigure}
Again, this is application dependent.
In the case of past specifications with perfect information the output
value will be computed without delay and emitted in this step.
In the case of $\Bool$ outputs with imperfect information, an SMT solver
can be used to discard a verdict.
For example, to determine the value of $\OK$ at time $t$, the verdict
$\TRUE$ can be discarded if $\exists *.ok^t$ is UNSAT, and the verdict
$\FALSE$ can be discarded if $\exists *.\neg ok^t$ is UNSAT.
For richer domains specific reasoning can be used, like emitting lower
and upper bounds or the set of constraints deduced.
Finally, Line~7 eliminates constraints that will not be necessary for
future deductions and performs variable renaming and summarization to
restrict the memory usage of the monitor (see
Section~\ref{sec:atwork}).
For past specifications with perfect information, after step $5$ every
equation will be evaluated to $y^t=v$ and the pruning will remove from
$E$ all the values that will never be accessed again.
Example~\ref{ex:symsem} in Appendix~\ref{app:examples} illustrates how
the algorithm handles imperfect information and pruning for the
specification of Example~\ref{ex:one}.

The symbolic monitoring algorithm generalizes the concrete monitoring
algorithm by allowing to reason about uncertain values, while it still
obtains the same results and performance under certainty.
Concrete monitoring allows to monitor with constant amount of
resources specifications with bounded future references when inputs
are known with perfect certainty.

Symbolic monitoring, additionally, allows to handle uncertainties and
assumptions, because the monitor stores equations that include
variables that capture the unknown information, for example the
unknown input values.
We characterize a symbolic monitor as a step function
$M:2^\Expr_\varphi\Into2^\Expr_\varphi$ that transforms expressions
into expressions.
At a given instant $t$ the monitor collects readings
$\psi^t\in\Expr_\varphi$ about the input values and applies the step function to the previous
information and the new information.
Given a sequence of input readings 
$\psi^1,\psi^2\ldots$ we use $M^0=M(\psi^0)$ and
$M^{i+1}=M(M^i\cup\psi^{i+1})$ for the sequence of monitor states reached
by the repeated applications of $M$.
We use $\Phi^t=\cup_{i\leq t}(\semSym{\varphi}^i\cup\sem{A^i}_\varphi\cup\psi^i)$ for the
formula that represents the unrolling of the specification and the current assumptions together
with the knowledge about inputs collected up-to $t$.
%

\begin{definition}[Sound and Perfect monitoring]
  Let $\varphi$ be a specification, $M$ a monitor for $\varphi$,
  $\psi^1,\psi^2\ldots$ a sequence of input observations,
  and $M^1,M^2\ldots$ the monitor states reached after repeatedly
  applying $M$.
  Consider an arbitrary predicate $\alpha$ involving only instant
  variables $x^t$ at time $t$.
  \begin{compactitem}
  \item $M^t$ is sound if whenever $M^t\models\alpha$ then
    $\Phi^t\models \alpha$.
  \item $M^t$ is perfect if it is sound and if
    $\Phi^t\models \alpha$ then $M^t\models\alpha$.
  \end{compactitem}
  \label{def:SoundPerfectMon}
\end{definition}
Note that soundness and perfectness is defined in terms of the ability
to infer predicates that only involve instant variables at time $t$,
so the monitor is allowed to eliminate, rename or summarize the rest
of the variables.
It is trivial to extend this definition to expressions $\alpha$ that
can use instant variables $x^{t'}$ with $(t-d)\leq{}t'\leq t$ for some
constant $d$.
If a monitor is perfect in this extended definition it will be able to
answer questions for variables within the last $d$ steps.

The version of the symbolic algorithm presented in
Alg.~\ref{alg:symonline} that never prunes (removing line 7) and
computes at all steps $\Phi^t$ is a sound and perfect monitor.
However, the memory that the monitor needs grows without bound if the
number of uncertain items also grows without bound.
In the next section we show that (1) trace length independent perfect
monitoring under uncertainty is not possible in general, even for past
only specifications and (2) we identify concrete theories, namely
Booleans and Linear Algebra and show that these theories allow perfect
monitoring with constant resources under unbounded uncertainty.

\section{Symbolic Monitoring at Work}
\label{sec:atwork}

%

 \medskip
 \refstepcounter{example}
 \noindent \textit{Example \theexample.}
  \label{ex:loadPrune}
  Consider the \Lola specification on the left, where the Real input
  stream $ld$ indicates the current CPU load and the Boolean input stream
  $usr_a$ indicates if the currently active user is user A.
  This specification checks whether the accumulated load of user A is
  at most 50\% of the total accumulated load.
  Consider the inputs
  $\LOAD = \langle ?, 10, 4, ?, ?, 1, 9, \dots \rangle$,
  $\USR_a = \langle \fa, \fa, \fa, \tr, \tr, \tr, \fa, \dots \rangle$
  from $0$ to $6$.
  \begin{wrapfigure}[5]{l}{0.5\textwidth}
    \begin{minipage}{0.5\textwidth}
      \vspace{-2.8em}
      \[
    \begin{array}{rcl}
  \ACC & := & \ACC[-1|0] + ld[\NOW] \\ 
      \ACC_a & := & \ACC_a[-1|0] + ite(\USR_a[\NOW],\\
      &&\hspace{8em}\LOAD[\NOW], 0) \\
  \OK & := & \ACC_a \leq 0.5 * \ACC \\
    \end{array}
  \]
\end{minipage}
\end{wrapfigure}
Also, assume that at every instant $t$, $0 \leq ld^t \leq 10$.
  At instant $6$ our monitoring algorithm would yield the symbolic
  constraints $(acc^6 = 24 + ld^0 + ld^3 + ld^4)$ and
  $(acc_a^6 = 1 + ld^3 + ld^4)$ for $acc^6$ and $acc_a^6$, and the
  additional
  $(0 \leq ld^0 \leq 10\; \land 0 \leq ld^3 \leq 10\; \land\ 0 \leq
  ld^4 \leq 10)$.
  %
  %
  %
  An existential query to an SMT solver allows to conclude that
  $\OK^6$ is true since $\ACC_a^6$ is at most $21$ but then $\ACC^6$ is $44$.
  However, every unknown variable from the input will appear in one of
  the constraints stored and will remain there during the whole
  monitoring process.  \qed

When symbolic computation is used in static analysis, it is not a
common concern to deal with a growing number of unknowns as usually
the number of inputs is fixed a-priori.
In contrast, a goal in RV is to build online monitors that are
trace-length independent, which means that the calculation time and
memory consumption of a monitor stays below a constant bound and does
not increase with the received number of inputs.
%
In Example~\ref{ex:loadPrune} above this issue can be tackled by rewriting the
constraints as part of the monitor's pruning step
using
$n \leftarrow ld^0$, $m \leftarrow (ld^3 + ld^4)$ to obtain
$( acc^t =  24 + n + m)$, $(acc_a^t  =  1 + m)$ and
$(0 \leq n \leq 10) \land (0 \leq m \leq 20)$.
  %
  From the rewritten constraints it can still be deduced that
  $acc_a^6 \leq 0.5 * acc^6$. 
  Note also that every instant variable in the specification only
  refers to previous instant variables.
  Thus for all $t \geq 7$, there is no direct reference to either
  $ld^3$ or $ld^4$. Variables $ld^3$ and $ld^4$ are, individually, no
  longer \emph{relevant} for the verdict and it does not harm to
  denote $ld^3+ld^4$ by a single variable $m$.
  We call this step of rewriting \emph{pruning} (of non-relevant
  variables).
 
  Let $\calC^t \subseteq \Expr^\Bool_\varphi$ be the set of constraints
  maintained by the monitor that encode its knowledge about inputs
  and assumptions for the given specification.
  In general, pruning is a transformation of a set of constraints $\calC^t$ into 
  a new set $\calC'^t$ requiring less memory, but is still describing
  the same relations between the instant variables:
  
  \begin{definition}[Pruning strategy]
    Let $\calC \subseteq \Expr^\Bool$ be a set of propositions
  over variables $\mathcal{A}$ and
  ${\R} = \{r_1, \dots, r_n\} \subseteq \mathcal{A}$ the subset of
  \emph{relevant} variables.
  We use $|\calC|$ for a measure on the size of $\calC$.
  A \emph{pruning strategy}
  $\calP: 2^{\Expr^{\Bool}} \rightarrow 2^{\Expr^{\Bool}}$ is a
  transformation such that for all $\calC\in\Expr^\Bool$,
  $|\calP(\calC)| \leq |\calC|$.
   A Pruning strategy $\calP: 2^{\Bbar} \rightarrow 2^{\Bbar}$ is called
   \begin{compactitem}
   \item \emph{sound}, \hspace{0.1em} whenever for all $\calC \subseteq \Expr^\Bool$,
     $\semR{\calC} \subseteq \semR{\calP(\calC)}$,
   \item \emph{perfect}, whenever for all 
     $\calC \subseteq \Expr^\Bool$,  $\semR{\calC}=\semR{\calP(\calC)}$,
   \end{compactitem}
   where
   $\semR{\calC} = \{(v_1,\dots,v_n) | (r_1 = v_1 \land \dots
 \land r_n = v_n) \models \calC \}$ is the set of all value
 tuples for ${\R}$ that entail the constraint set $\calC$.
 We say that the pruning strategy is \emph{constant} if
 for all $\calC \subseteq \Expr^\Bool: |\calP(\calC)|
 \leq c$ for a constant $c \in \mathbb{N}$.
  \label{def:pruningStrat}
  \end{definition}
  
  A monitor that exclusively stores a set $\calC^t$ for every
  $t \in \mathbb{T}$ is called a constant-memory monitor if there is a
  constant $c \in \mathbb{N}$ such that for all $t$,
  $|\calC^t| \leq c$.

  Previously we defined an online monitor $M$ as a function that
  iteratively maps sets of expressions to sets of expressions.
  Clearly, the amount of information to maintain grows unlimited if we
  allow the monitor to receive constraints that contain information of
  an instant variable at time $t$ at any other time $t'$.
  Consequently, we first restrict our attention to \emph{atemporal
    monitors}, defined as those which receive proposition sets that
  only contain instant variables of the current instant of time.
  Atemporal monitors cannot handle assumptions like
  $ld[-1|0] \leq 1.1 * ld[\NOW]$.
  At the end of this section we will extend our technique to monitors that may refer $n$ instants to the past.

  \newcounter{thm-perfect}
  \setcounter{thm-perfect}{\value{theorem}}
  
  \begin{theorem}
   \label{th:perfectness}
   Given a specification $\varphi$ and a constant pruning strategy
   $\calP$ for $\Expr^\Bool_\varphi$, there is an atemporal
   constant-memory monitor $M_\varphi$ s.t.
   \begin{compactitem}
    \item $M_\varphi$ is sound if the pruning strategy is sound. 
    \item $M_\varphi$ is perfect if the pruning strategy is perfect.
   \end{compactitem}
  \end{theorem}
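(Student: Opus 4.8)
The plan is to construct $M_\varphi$ explicitly from Algorithm~\ref{alg:symonline}, using the given constant pruning strategy $\calP$ as the implementation of Line~7, and then to verify soundness and perfectness by induction on $t$. Concretely, I would define the monitor state $M^t$ to be the set of constraints $\calC^t$ that the algorithm maintains, and define the step function so that on input $\psi^t$ it first adds the freshly instantiated equations $\semSym{\varphi}^t$ and assumption instances $\sem{A^t}_\varphi$ together with the readings $\psi^t$, then applies $\calP$ with the relevant-variable set $\R$ chosen to be all instant variables $x^t$ occurring at the current time together with those that may still be referenced in the future (which, since all offsets are $-1$ or $0$, is a bounded set, so $\calP$ applies and its output has size at most the constant $c$). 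The key structural fact I would isolate first is a loop invariant: after the pruning step at time $t$, the stored set $\calC^t$ satisfies $\semR{\calC^t} = \semR{\Phi^t}$ when $\calP$ is perfect, and $\semR{\Phi^t} \subseteq \semR{\calC^t}$ when $\calP$ is merely sound, where $\R$ ranges over the relevant variables at $t$.

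For the inductive step I would argue as follows. Suppose the invariant holds at $t-1$. At time $t$ the algorithm forms $\calC^{t-1} \cup \semSym{\varphi}^t \cup \sem{A^t}_\varphi \cup \psi^t$; since the newly added equations and readings mention only instant variables at time $t$ (or at $t-1$, which were kept relevant by the previous pruning) this union has exactly the same $\semR{\cdot}$ over the relevant variables of $t$ as $\Phi^t$ does, using the induction hypothesis and the fact that relevance is closed under "referenced one step later." Applying $\calP$ then preserves (perfect case) or over-approximates (sound case) this set of relevant-variable valuations, by Definition~\ref{def:pruningStrat}. This re-establishes the invariant at $t$. Because $|\calP(\cdot)| \le c$, the stored set is constant-memory; because the newly added pieces are themselves bounded in size (the specification is fixed and $A$ is fixed), the transient set before pruning is also bounded, so the whole monitor runs in constant memory and is atemporal by construction — it only ever receives and stores constraints over current-instant variables.

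Finally, soundness and perfectness in the sense of Definition~\ref{def:SoundPerfectMon} follow from the invariant together with the observation that the query predicate $\alpha$ involves only instant variables $x^t$ at time $t$, all of which are relevant. If $\calP$ is sound then $\semR{\Phi^t} \subseteq \semR{M^t}$, so any valuation of the relevant variables consistent with $M^t$ is consistent with $\Phi^t$; hence $M^t \models \alpha$ implies $\Phi^t \models \alpha$, which is exactly soundness. If $\calP$ is perfect the two sets coincide, so $M^t \models \alpha \iff \Phi^t \models \alpha$, giving perfectness. (Here one uses that entailment of a predicate over a variable set $\R$ is determined by the set of $\R$-valuations satisfying the constraint set, which is precisely what $\semR{\cdot}$ records.)

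The main obstacle I anticipate is pinning down the correct notion of "relevant variable" and showing it is a bounded set that is stable under the monitor step — i.e., proving that a variable that is not directly referenced by any future instantiation of $E$ or $A$ can safely be pruned, and that the set of variables which \emph{are} so referenced has size bounded independently of $t$. This leans essentially on the normal-form assumption (all offsets $-1$ or $0$, no $0$-cycles): it guarantees that only the previous instant's variables can be touched later, so $\R$ at each step consists of the current and previous instant's variables, a bounded collection. Making this precise — and checking that the atemporal restriction (inputs mention only current-instant variables) is genuinely compatible with keeping previous-instant variables relevant for one step — is the delicate bookkeeping; the soundness/perfectness algebra on top of a correct invariant is then routine.
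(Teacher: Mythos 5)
Your proposal is correct and follows essentially the same route as the paper's proof: define $M^t = \calP(M^{t-1}\cup\semSym{\varphi}^t\cup\sem{A^t}_\varphi\cup\psi^t)$, prove by induction the invariant $\semR{\Phi^t}\subseteq\semR{M^t}$ (with equality in the perfect case) using the fact that, by atemporality and the flattened $-1/0$ normal form, the only variables shared between the accumulated state and the newly instantiated constraints are the previous instant's relevant variables, and read off constant memory from $|\calP(\cdot)|\le c$. The one point to tighten is that in the sound case the pre-pruning union only yields a superset of $\semR{\Phi^t}$ (not "exactly the same" set), but you clearly track this distinction in the stated invariant.
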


Yet we have not given a complexity measure for our constraint sets.
For our approach we use the number of variables and constants in the constraints, 
that is
$|\calC| = \sum_{\varphi \in \calC} |\varphi|$ and $|c| = 1$,
$|v| = 1$,
$|f(e_1,\dots,e_n)| = |e_1| + \dots + |e_n|, |\textit{ite}(e_1, e_2,
e_3)| = |e_1| + |e_2| + |e_3|$ for a constant $c$ and an atomic
proposition $v$.

\subsection{Application to \Lola fragments}
\label{subsec:PruningLOLAFragments}
  
%
We describe now perfect pruning strategies for $\LolaB$ and $\LolaLA$.
For $\LolaBLA$ we will show that no such perfect pruning strategy
exists but present a sound and constant pruning strategy.


\subsubsection{\LolaB:}
First we consider the fragment $\LolaB$ where all
input and output streams, constants and functions are of type Boolean.
Consequently, constraints given to the monitor only contain variables,
constants and functions of type Boolean.

 \medskip
 \refstepcounter{example}
 \noindent \textit{Example \theexample.}
 Consider the following specification (where all inputs are uncertain,
 $\oplus$ denotes exclusive or) shown on the left.
  The unrolled semantics, shown on the right, indicates that \OK{} is
  always true.

  \noindent\begin{tabular}{l@{\hspace{2.5em}}l}
             $ 
    \begin{array}{rcl}
    \\
  a & := & a[-1|\FALSE] \oplus x[\NOW] \\
  b & := & b[-1|\TRUE] \oplus x[\NOW] \\
  \OK & := & a[\NOW] \oplus b[\NOW] \\
    \end{array}
$            
             &
               \footnotesize
    \begin{tabular}{|p{1.5em}|p{4em}|p{6em}|p{8em}|l}\hline  
    0 & 1 & 2 & 3 & \ldots \\\hline
    $\phantom{\neg}x^0$ & $\phantom{\neg}x^0\oplus x^1$ & $\phantom{\neg}x^0\oplus x^1\oplus x^2$ & $\phantom{\neg}x^0\oplus x^1\oplus x^2 \oplus x^3$ & \ldots \\
    $\neg x^0$ & $\neg x^0\oplus x^1$ & $\neg x^0\oplus x^1\oplus x^2$ & $\neg x^0\oplus x^1 \oplus x^2 \oplus x^3$ &  \ldots \\
    $\TRUE$ & $\TRUE$ & $\TRUE$ & $\TRUE$ & \ldots \\\hline
\end{tabular}
  \end{tabular}
  \vspace{.8em}

%



 \noindent However, the Boolean formulas maintained internally by the
 monitor are continuously increasing.
 Note that at time 1 the possible combinations of $(a^1, b^1, \OK^1)$
 are $(\fa,\tr,\tr)$ and $(\tr,\fa,\tr)$, as shown below (left).
 By eliminating duplicates from
  \begin{wrapfigure}[6]{l}{0.42\textwidth}
    \begin{minipage}{0.42\textwidth}
      \vspace{-2.3em}
 \[
   \begin{array}{@{}l@{\hspace{1.5em}}l}
     \begin{array}{c|cccc}
       (x^0,x^1) &  0 0 & 0 1 & 1 0 & 1 1 \\
       \hline
       a^1 & \FALSE  & \TRUE & \TRUE & \FALSE \\
       b^1 & \TRUE  & \FALSE & \FALSE & \TRUE \\
       \OK^1 & \TRUE & \TRUE & \TRUE & \TRUE \\
     \end{array} &
  \begin{array}{c|cc}
    v^1 &  0 & 1 \\
    \hline
    a^1 & \FALSE  & \TRUE \\
    b^1 & \TRUE & \FALSE \\
    \OK^1 & \TRUE & \TRUE \\
  \end{array}
   \end{array}
 \]
\end{minipage}
\end{wrapfigure}
 this table we obtain another table
 with two columns which can be expressed by formulas over a single,
 fresh variable $v^1$ (as shown on the right).
From this table we can directly infer the new formulas
 $a^1 = v^1$, $b^1 = \neg v^1$, $\OK^1 = \tr$, which preserve the
 condensed information that $a^1$ and $b^1$ are opposites.
 We can use these new formulas for further calculation.
 At time $2$, $a^2 = v^1\oplus x^2$, $b^1 = \neg v^1\oplus x^2$ which
 we rewrite as $a^2 = v^2$, $b^1 = \neg v^2$ again concluding
 $\OK^1 = \tr$.
 %
 This illustrates how the pruning guarantees a constant-memory
 monitor. Note that this monitor will be able to infer at every step
 that $\OK$ is $\TRUE$ even without reading any input. \qed

  
The strategy from the example above can be generalized to a
pruning strategy.
Let $\calR = \{r_1,\dots,r_m\}$ be the set of relevant variables (in
our case the output variables ${s}_i^t$) and
$\mathcal{V} = \{s_1,\dots,s_n\} \cup \calR$ all
variables (in our case input variables and fresh variables from
previous pruning applications).
Let $\calC$ be a set of constraints over $r_1,\dots,r_m,s_1,\dots,s_m$,
which can be rewritten as a Boolean expression $\gamma$ by conjoining
all constraints.

The method generates a value table $T$ which includes as columns all
value combinations of $(v_1,\dots,v_m)$ for $(r_1,\dots,r_m)$ such
that $(r_1 = v_1) \land \dots \land (r_m = v_m) \models \gamma$.
Then it builds a new constraint set $\calC'$ with an expression
$r_i = \psi_i(v_1,\dots,v_k)$ for every $1 \leq i \leq m$ over $k$
fresh variables, where the $\psi_i$ are generated from the rows of the
value table. The number of variables is $k = \lceil \log(c) \rceil$
with $c$ being the number of columns in the table (i.e. combinations
of $r_i$ satisfying $\gamma$).
This method is the $\LolaB$ pruning strategy which is perfect.
By Theorem~\ref{th:perfectness} this allows to build a perfect
atemporal constant-memory monitor for $\LolaB$.

\newcounter{lem-lolaBperfect}
\setcounter{lem-lolaBperfect}{\value{lemma}}

\begin{lemma}
  \label{lem:lolaBperfect}
  The  $\LolaB$ pruning strategy is perfect and constant.
\end{lemma}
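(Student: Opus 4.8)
The plan is to prove the two claims of Lemma~\ref{lem:lolaBperfect} separately: first that the $\LolaB$ pruning strategy is \emph{perfect}, i.e.\ $\semR{\calC}=\semR{\calP(\calC)}$ for every constraint set $\calC$, and then that it is \emph{constant}, i.e.\ $|\calP(\calC)|\leq c$ for a fixed $c\in\mathbb{N}$ depending only on the specification $\varphi$.

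For perfectness, recall that $\semR{\calC}$ is by definition the set of value tuples $(v_1,\dots,v_m)$ for the relevant variables $\calR=\{r_1,\dots,r_m\}$ such that $(r_1=v_1\And\dots\And r_m=v_m)\models\calC$. The key observation is that the value table $T$ constructed by the strategy has, by construction, exactly one column for each such tuple; so the set of columns of $T$ \emph{is} $\semR{\calC}$. It therefore suffices to show that $\semR{\calP(\calC)}$ equals the set of columns of $T$ as well. First I would argue soundness of one inclusion: every column of $T$ extends to a satisfying assignment of $\calP(\calC)=\{r_i=\psi_i(v_1,\dots,v_k)\mid 1\leq i\leq m\}$, because the $\psi_i$ are defined precisely so that the $j$-th row-pattern of fresh variables $(v_1,\dots,v_k)$ maps to the $j$-th column of $T$ (here one uses $k=\lceil\log c\rceil$ so that there are at least $c$ distinct patterns of the fresh variables to assign to the $c$ columns). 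Conversely, any assignment satisfying $\calP(\calC)$ fixes $(v_1,\dots,v_k)$ to some pattern; if that pattern is one of the $c$ used ones the induced $(r_1,\dots,r_m)$-tuple is a column of $T$, and if it is an unused pattern one must ensure the $\psi_i$ are defined (e.g.\ by a default column) so that no \emph{new} $\calR$-tuple outside $T$ is produced — this is a small design point in the construction that I would make explicit. Combining the two inclusions gives $\semR{\calP(\calC)}=\{\text{columns of }T\}=\semR{\calC}$.

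For constantness, the bound is that $|\calP(\calC)|$ depends only on $m$ (the number of relevant variables, i.e.\ output instant variables of $\varphi$ at the current instant) and on $k=\lceil\log c\rceil$, and that $c$ itself is bounded: since a column of $T$ is a tuple in $\Bool^m$, there are at most $2^m$ columns, so $c\leq 2^m$ and hence $k\leq m$. Each of the $m$ constraints $r_i=\psi_i(v_1,\dots,v_k)$ has, under the size measure $|c|=|v|=1$ and $|f(e_1,\dots,e_n)|=\sum|e_i|$, a size bounded by a function of $k$ alone — a Boolean function of $k$ variables can be written (e.g.\ in DNF over the $c$ satisfying patterns) with size at most $O(c\cdot k)\leq O(2^m\cdot m)$ — so $|\calP(\calC)|\leq m\cdot O(2^m\cdot m)$, a constant independent of $t$ and of the trace length. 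I would state this bound as $c\DefinedAs 2^{|O|}\cdot\text{poly}(|O|)$ or similar and not optimize it.

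The main obstacle is not the counting but making the perfectness argument airtight on the point of \emph{unused fresh-variable patterns}: if $c$ is not an exact power of two there are $2^k-c$ patterns of $(v_1,\dots,v_k)$ not assigned to any column of $T$, and the definition of the $\psi_i$ on those patterns must not introduce an $\calR$-tuple that fails to satisfy $\calC$ (which would break soundness, $\semR{\calP(\calC)}\not\subseteq\semR{\calC}$) nor omit a satisfying tuple (which would break the reverse inclusion). The clean fix, which I would spell out, is to map every unused pattern to a fixed already-present column of $T$ (WLOG the first one), so that $\calP(\calC)$ has exactly the columns of $T$ as its $\calR$-projection of models; everything else is then routine. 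A secondary, purely technical point is to check that conjoining the constraints into a single Boolean expression $\gamma$ and reading off satisfying $\calR$-patterns is legitimate in $\LolaB$, which is immediate since all constraints are quantifier-free Boolean formulas over finitely many variables.
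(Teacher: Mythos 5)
Your proposal is correct and follows essentially the same route as the paper's proof: identify the columns of $T$ with $\semR{\calC}$, argue that the models of $\{r_i=\psi_i\}$ project onto exactly those columns, and bound $|\calP(\calC)|$ via $c\leq 2^m$ with $m$ fixed by the flattened specification. Your explicit treatment of the unused fresh-variable patterns when $c$ is not a power of two (mapping them to an existing column) is a detail the paper's proof leaves implicit in the phrase ``the $\psi_i$ by definition just allow exactly these combinations,'' and is a worthwhile clarification rather than a different argument.
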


 
%
  
\subsubsection{$\LolaLA$:}

The same idea used for \LolaB can be adapted to Linear Algebra.

 \medskip
 \refstepcounter{example}
 \noindent \textit{Example \theexample.}
 Consider the specification on the left.
 The main idea is that $\ACC_a$
\begin{wrapfigure}[5]{l}{0.48\textwidth}
  \begin{minipage}{0.48\textwidth}
    \vspace{-2.5em}
\[
\begin{array}{rcl}
  \ACC_a & := & \ACC_a[-1|0] + \LOAD_a[\NOW] \\
  \ACC_b & := & \ACC_b[-1|0] + \LOAD_b[\NOW] \\
  \TOTAL & := & \TOTAL[-1|0] + \frac{1}{2}(\LOAD_a[\NOW]+\\
  && \hspace{7em}\LOAD_b[\NOW])
\end{array}
\]
\end{minipage}
\end{wrapfigure}
accumulates the load of CPU A (as indicated by $\LOAD_a$), and
similarly $\ACC_b$ accumulates the load of CPU B (as indicated by
$\LOAD_b$).
Then, $\TOTAL$ keeps the average of $\LOAD_a$ and $\LOAD_b$.
The unrolled semantics is

{\footnotesize
\vspace{.8em}
\ \hspace{-1.5em}\begin{tabular}{|p{5em}|p{11em}|p{17em}|l}\hline  
  0 & 1 & 2 & \ldots \\ \hline
  $\LOAD_a^0$ & $\LOAD_a^0 + \LOAD_a^1$ & $\LOAD_a^0 + \LOAD_a^1 + \LOAD_a^2$ & \ldots \\
  $\LOAD_b^0$ & $\LOAD_b^0 + \LOAD_b^1$ & $\LOAD_b^0 + \LOAD_b^1 + \LOAD_b^2$ &  \ldots \\
  $\frac{1}{2}(\LOAD_a^0+\LOAD_b^0)$ & $\frac{1}{2}((\LOAD_a^0+\LOAD_b^0)+(\LOAD_a^1+\LOAD_b^1))$ & $\frac{1}{2}((\LOAD_a^0+\LOAD_b^0)+(\LOAD_a^1+\LOAD_b^1)+(\LOAD_a^2+\LOAD_b^2))$   & \ldots \\ \hline
\end{tabular}
\vspace{.8em}
}

\noindent Again, the formulas maintained during monitoring are
increasing.
The formulas at $0$ cannot be simplified, but at $1$, $\LOAD_a^0$ and
$\LOAD_a^1$ have exactly the same influence on $\ACC_a^1, \ACC_b^1$
and $\TOTAL$.
To see this consider $(\ACC_a^1, \ACC_b^1$, $\TOTAL^1)$ as matrix
multiplication shown below on the left.
The matrix in the middle just contains two linearly independent
vectors.
Hence the system of equations can be equally written as shown in the
right, over two fresh variables $u^1, v^1$:
\[
  \begin{array}{l@{\hspace{5em}}l}
    \left(
    \begin{array}{c}
      \ACC_a^1\\
      \ACC_b^1\\
      \TOTAL^1
    \end{array}
    \right)
    =
    \left(
    \begin{array}{ccccc}
      1 & 0 & 1 & 0\\
      0 & 1 & 0 & 1\\
      \frac{1}{2} & \frac{1}{2} & \frac{1}{2} & \frac{1}{2}\\
    \end{array}
    \right)
    *
    \left(
    \begin{array}{c}
      \LOAD_a^0\\
      \LOAD_b^0\\
      \LOAD_a^1\\
      \LOAD_b^1\\
    \end{array}
    \right) &
\left(
 \begin{array}{c}
 \ACC_a^1\\
 \ACC_b^1\\
 \TOTAL^1
 \end{array}
 \right)
  =
 \left(
 \begin{array}{ccccc}
 1 & 0\\
 0 & 1\\
 \frac{1}{2} & \frac{1}{2}\\
 \end{array}
 \right)
 *
 \left(
 \begin{array}{c}
 u^1\\
 v^1\\
 \end{array}
    \right)
  \end{array}
  \]
  The rewritten formulas then again follow directly from the matrix.
  Repeating the application at all times yields:

  \vspace{.8em}
\begin{tabular}{|p{4em}|p{12em}|p{13em}|l}\hline  
  0 & 1 & 2 & \ldots \\ \hline
  $\LOAD_a^0$ &                        $\LOAD_a^0 + \LOAD_a^1 \equiv u^1$                                                  & $u^1 + \LOAD_a^2 \equiv u^2$                                                & \ldots \\
  $\LOAD_b^0$ &                        $\LOAD_b^0 + \LOAD_b^1 \equiv v^1$                                                  & $v^1 + \LOAD_b^2 \equiv v^2$                                                &  \ldots \\
  $\frac{\LOAD_a^0+\LOAD_b^0}{2}$ &    $\frac{(\LOAD_a^0+\LOAD_b^0)+(\LOAD_a^1+\LOAD_b^1)}{2} \equiv \frac{u^1 + v^1}{2}$  & $\frac{(u^1 + v^1)+(\LOAD_a^2+\LOAD_b^2)}{2} \equiv \frac{u^2 + v^2}{2}$    & \ldots \\ \hline
\end{tabular}
\vspace{.8em}

\noindent which results in a constant monitor.\qed

This pruning strategy can be generalized as well.
Let $\calR = \{r_1,\dots,r_m\}$ be a set of relevant variables (in
our case the output variables ${s}_i^t$) and
$\calV = \{s_1,\dots,s_n\} \cup \calR$ be the other variables (in our
case input variables or fresh variables from previous pruning
applications).
Let $\calC$ be a set of constraints which has to be fulfilled over
$r_1,\dots,r_m,s_1,\dots,s_n$, which contains equations of the form
$c = \sum_{i=1}^m c_{r_i} * r_i + \sum_{i=1}^n c_{s_i} * s_i
+ c'$ where $c,c',c_{s_i},c_{r_i}$ are constants.

If the equation system is unsolvable (which can easily be checked) we
return $\mathcal{C'} = \{0=1\}$, otherwise we can rewrite it as shown
on the left.
The matrix $N$ of this equation system has $m$ rows and $n$
columns.
Let $r$ be the rank of this matrix which is limited by $\min\{m,n\}$.
Consequently an $m \times r$ matrix $N'$ with $r \leq m$ exists with
the same span as $N$ and the system can be rewritten (without loosing
solutions to $(r_1,\dots,r_m)$).
From this rewritten equation system a new constraint
\begin{wrapfigure}[5]{l}{0.55\textwidth}
  \hspace{-0.3em}\begin{minipage}{0.58\textwidth}
    \vspace{-2.9em}
\[\begin{array}{l}
\left( \begin{array}{c} r_1\\ \vdots\\ r_m \end{array} \right)  =
 \left( \begin{array}{ccc}
 c_{1,1} & \dots  & c_{1,n}\\ & \vdots &\\ c_{m,1} & \dots  & c_{m,n}\\
 \end{array} \right)
*
 \left( \begin{array}{c} s_1\\ \vdots\\ s_n\\ \end{array} \right)
 +
 \left( \begin{array}{c} o_1\\ \vdots\\ o_m\\ \end{array}\right)
 \\[3em]
\end{array}\]
\end{minipage}
\end{wrapfigure}
set $\calC'$ can be generated which contains the equations from the
system.
We call this method the \LolaLA pruning strategy, which is perfect and
constant.

\newcounter{lem-lolaLAperfect}
\setcounter{lem-lolaLAperfect}{\value{lemma}}

\begin{lemma}
\label{lem:lolaLAperfect}
The $\LolaLA$ pruning strategy is perfect and constant.
\end{lemma}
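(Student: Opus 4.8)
The plan is to read $\semR{\calC}$ geometrically, as the image of an affine map, and to show that the \LolaLA{} pruning step replaces that map by another one with the same image but with a description of bounded size. First I would fix a constraint set $\calC$ in the \LolaLA{} normal form: a finite set of linear equalities over the relevant variables $r_1,\dots,r_m$ (the output instant variables retained for the current instant $t$) and the auxiliary variables $s_1,\dots,s_n$ (input instant variables and fresh variables produced by earlier prunings). I would first eliminate, by Gaussian elimination, the subsystem of equalities that mention only the $s_j$. If that subsystem is inconsistent then so is $\calC$; the strategy outputs $\{0=1\}$, both $\semR{\calC}$ and $\semR{\{0=1\}}$ are empty, perfectness holds trivially in this case, and $|\{0=1\}|=2$ is bounded. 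Otherwise the elimination rewrites $\calC$ equivalently -- without changing the set of solutions for $(r_1,\dots,r_m)$ -- into the parametric form $r = N s + o$ with $N\in\Real^{m\times n}$, $o\in\Real^m$, and the $s_j$ now ranging freely over $\Real^n$; this is exactly the form assumed by the algorithm.

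The key observation is that, for this form, $(v_1,\dots,v_m)\in\semR{\calC}$ iff there is an assignment of the $s_j$ with $Ns+o=(v_1,\dots,v_m)^{\mathsf T}$, so $\semR{\calC}$ is precisely the image of $s\mapsto Ns+o$, i.e.\ the affine subspace $o+\operatorname{colsp}(N)$, whose dimension is $r:=\operatorname{rank}(N)\le\min\{m,n\}$. The pruning step chooses any $N'\in\Real^{m\times r}$ whose columns form a basis of $\operatorname{colsp}(N)$ (for instance a maximal linearly independent subset of the columns of $N$) and emits $\calC' = \{\, r_i = \sum_{k=1}^{r} N'_{ik}\, u_k + o_i \ \mid\ 1\le i\le m \,\}$ over $r$ fresh, non-relevant variables $u_1,\dots,u_r$. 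Since $\operatorname{colsp}(N')=\operatorname{colsp}(N)$, the image of $u\mapsto N'u+o$ is again $o+\operatorname{colsp}(N)$, and this image is exactly $\semR{\calC'}$; hence $\semR{\calP(\calC)}=\semR{\calC'}=\semR{\calC}$, which is perfectness (soundness being the inclusion $\subseteq$, which is then immediate).

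For the constant bound I would use the paper's size measure: each emitted equation $r_i=\sum_k N'_{ik}u_k+o_i$ has size at most $1+(2r+1)=2r+2$ (one token for $r_i$; a coefficient and a variable for each $k$; one token for the constant $o_i$), so $|\calC'|\le m(2r+2)$. As $r\le m$ and $m$ is bounded by the number of stream variables of $\varphi$ -- a constant independent of $t$ and of the trace length -- we get $|\calP(\calC)|\le c$ with $c=m(2m+2)$, which also dominates the size of the inconsistent case; thus the strategy is constant. Combined with Theorem~\ref{th:perfectness}, this yields a perfect atemporal constant-memory monitor for \LolaLA{}.

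The step I expect to be the main obstacle is the reduction to parametric form and the exact identification of $\semR{\calC}$: one has to argue that projecting the solution set of a system of linear equalities onto a subset of its coordinates yields again an affine subspace, whose direction is the $N$-image of the direction space of the solutions of the $s$-only subsystem, so that after eliminating that subsystem $\semR{\calC}$ is genuinely $o+\operatorname{colsp}(N)$ with the $s_j$ free. This is also where the restriction to \emph{equality} constraints is essential: imprecise (interval) readings would turn $\semR{\calC}$ into a general polyhedron, for which no bounded affine description need exist, whereas in \LolaLA{} we rely on readings being precise equalities (or absent), keeping $\calC$ within linear equalities. The remaining steps -- column-space preservation under $N\mapsto N'$ and the token count -- are routine linear algebra.
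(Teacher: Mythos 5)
Your proof is correct and follows essentially the same route as the paper's: bring the system into the parametric form $r = Ns + o$, replace $N$ by an $m\times r$ matrix $N'$ with the same column space (so the solution set for $(r_1,\dots,r_m)$, i.e.\ $\semR{\calC}$, is unchanged), and bound the output size by $m(2r+2)\le 2m^2+2m$ with $m$ constant. You are merely more explicit than the paper about the two points it glosses over — why the span-preserving rewrite preserves $\semR{\cdot}$ (identifying it as the affine image $o+\operatorname{colsp}(N)$) and the preliminary elimination of the $s$-only subsystem — which is a welcome tightening rather than a different argument.
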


  
%

\subsubsection{$\LolaBLA$}
%
%
%
  Consider the specification below (left) where $i$, $a$ and
  $b$ are input streams of type $\mathbb{R}$.
  Consider a trace where the values of stream $i$ are unknown until
  time $2$, but that we have the assumption $0 \leq i[\NOW] \leq 1$.
  The unpruned symbolic expressions describing the values of $x,y$ at
  time $2$ would then be in matrix notation:
  \[
    \begin{array}{l@{\hspace{4em}}l}
      \begin{array}{rcl}
        x & := & x[-1|0] + i[\NOW] \\
        y & := & 2 * y[-1|0] + i[\NOW] \\
        \OK & := & (a[\NOW] = x[\NOW]) \land (b[\NOW] = y[\NOW])
      \end{array} &
\left(
\begin{array}{c}
x^2\\
y^2
\end{array}
\right)
 =
\left(
\begin{array}{ccccc}
1 & 1 & 1\\
4 & 2 & 1\\
\end{array}
\right)
*
\left(
\begin{array}{c}
i^0\\
i^1 \\
i^2
\end{array}
      \right)
    \end{array}
  \]
\def\nudge{.5}
\tikzset{axis/.style={ultra thick, black, -latex, shorten <=-\nudge cm, shorten >=-2*\nudge cm}}
%
\begin{wrapfigure}[9]{r}{0.50\textwidth}
\raggedleft   
\vspace{-3.5em}
\begin{tikzpicture}[scale=0.7]

\tkzInit[xmax=8,ymax=3,xmin=-.5,ymin=-.5]
\tkzGrid
\tkzAxeXY[/tkzdrawX/label=$y^2$,/tkzdrawY/label=$x^2$]

\fill [blue!50] (0,0) -- (4,1) -- (6,2) -- (7,3) -- (3,2) -- (1,1) -- (0,0);

\draw [-stealth,line width=.7pt] (0,0) -- (4,1);
\draw [-stealth,line width=.7pt] (0,0) -- (2,1);
\draw [-stealth,line width=.7pt] (0,0) -- (1,1);
\end{tikzpicture}
\caption{Set of possible values of $x^2$ and $y^2$}
\label{fig:polygon}
\end{wrapfigure}
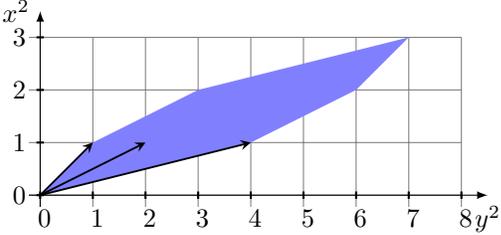
%
\noindent Since the assumption forces all $i^j$ to be
between $0$ and $1$ the possible set of value combinations $x$ and $y$
can take at time $2$ is described by a polyhedron with $6$ edges
depicted in Fig.~\ref{fig:polygon}.
%
Describing this polygon requires 3 vectors.
%
%
It is easy to see that each new unknown input generates a new vector,
which is not multiple of another. Hence for $n$ unknown
inputs on stream $i$ the set of possible value combinations for
$(x^t,y^t)$ is described by a polygon with $2n$ edges for which a constraint set of size 
$\mathcal{O}(n)$ is required. 
%
This counterexample implies that for \LolaBLA there is no perfect
pruning strategy.
However, one can apply the following approximation:
Given a constraint set $\calC$ over
$\calV = \{s_1,\dots,s_n\} \cup \calR$ with
$\calR = \{r_1,\dots,r_m\}$ the set of relevant variables.
\begin{compactenum}
\item Split the set of relevant variables into ${\R}_{\Bool}$
  containing those of type Boolean and ${\R}_{\Real}$ containing
  those of type Real.
\item For ${\R}_\Bool$ do the rewriting as for \LolaB obtaining
  $\calC'_\Bool$.
\item For ${\R}_\Real$ do the rewriting as for \LolaLA over
  $\calC^{\mathcal{LE}}$ with $\calC^{\mathcal{LE}} \subseteq \calC$
  being the set of all linear equations in $\calC$, obtaining
  $\calC'_\Real$.
\item For all fresh variables $v_i$ with $1 \leq i \leq k$ in
  $\calC'_\Real$ calculate a minimum bound $l_i$ and maximum bound $g_i$ (may be over-approximating) over
  the constraints $\calC \cup \calC'_\Real$ and build
  $\calC''_\Real = \calC'_\Real \cup \{l_i \leq v_i \leq g_i | 1 \leq
  i \leq k\}$.
 \item Return $\calC' = \calC'_\Bool \cup \calC''_\Real$
 \end{compactenum}
 We call this strategy the \LolaBLA pruning strategy, which allows to
 build an atemporal (imperfect but sound) constant-memory monitor.

\newcounter{lem-lolaBLA}
\setcounter{lem-lolaBLA}{\value{lemma}}
 
 \begin{lemma}
   \label{lem:lolaBLA}
  The \LolaBLA pruning strategy is sound and constant.
\end{lemma}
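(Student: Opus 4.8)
The plan is to verify the two requirements of Definition~\ref{def:pruningStrat} for the strategy $\calP$ that maps a constraint set $\calC$ to $\calC'_\Bool\cup\calC''_\Real$: \emph{soundness}, i.e.\ $\semR{\calC}\subseteq\semR{\calP(\calC)}$ for every $\calC$, and \emph{constantness}, i.e.\ $|\calP(\calC)|$ is bounded by a constant depending only on $\varphi$. The guiding observation is that $\calP$ is a composition of rewritings that were already proved \emph{perfect} on their own sub-theory --- the $\LolaB$ value-table rewriting of Lemma~\ref{lem:lolaBperfect} applied to ${\R}_\Bool$, and the $\LolaLA$ span rewriting of Lemma~\ref{lem:lolaLAperfect} applied to ${\R}_\Real$ --- run after \emph{dropping} constraints from $\calC$ (the $\LolaLA$ step only sees the purely linear sub-system $\calC^{\mathcal{LE}}$, i.e.\ the inequalities are discarded) and followed by the \emph{addition} of the bound atoms $l_i\le v_i\le g_i$. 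Dropping constraints can only enlarge the projection onto ${\R}$, and the added atoms are chosen valid over $\calC\cup\calC'_\Real$, so no realizable value tuple for ${\R}$ is lost; perfectness genuinely fails, consistently with the polyhedron counterexample preceding the lemma.

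For soundness I would take $\vec v\in\semR{\calC}$, witnessed by an assignment $\mu$ of all variables in $\calV$ satisfying $\calC$ with $\mu|_{\R}=\vec v$, and build an extension of $\mu$ to the fresh variables introduced by $\calP$ that satisfies $\calC'_\Bool\cup\calC''_\Real$ and still restricts to $\vec v$ on ${\R}$. On the Boolean side, the value table of step~2 contains every combination of ${\R}_\Bool$ realizable under $\calC$ --- this is the content of the perfectness part of Lemma~\ref{lem:lolaBperfect}, and for the present claim it would even suffice that the table overshoots to a superset --- so $\mu|_{{\R}_\Bool}$ is one of its columns and there is an assignment of the fresh Boolean variables making $\calC'_\Bool$ true with the correct values on ${\R}_\Bool$. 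On the real side, $\calC^{\mathcal{LE}}\subseteq\calC$ gives $\mu\models\calC^{\mathcal{LE}}$, so by the span argument of Lemma~\ref{lem:lolaLAperfect} there is an assignment $\mu_v$ of the fresh real variables with $(\mu,\mu_v)\models\calC'_\Real$ and agreeing with $\mu$ on ${\R}_\Real$; then $(\mu,\mu_v)\models\calC\cup\calC'_\Real$, and since each $l_i\le v_i\le g_i$ is a valid (possibly over-approximating) bound over $\calC\cup\calC'_\Real$ we obtain $l_i\le\mu_v(v_i)\le g_i$, hence $(\mu,\mu_v)\models\calC''_\Real$. The fresh Boolean and fresh real variables are disjoint and ${\R}_\Bool$ and ${\R}_\Real$ partition ${\R}$, so these two partial extensions glue into a single assignment satisfying $\calP(\calC)$ with value $\vec v$ on ${\R}$, i.e.\ $\vec v\in\semR{\calP(\calC)}$.

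For constantness I would bound $|\calC'_\Bool|$ and $|\calC''_\Real|$ by quantities depending on $\varphi$ alone. Since ${\R}$ consists of the instant variables of the output streams at the current time, $m_\Bool=|{\R}_\Bool|$ and $m_\Real=|{\R}_\Real|$ are at most the number of output streams of $\varphi$. Then $\calC'_\Bool$ consists of $m_\Bool$ equations $r_i=\psi_i(v_1,\dots,v_k)$ with $k=\lceil\log c\rceil\le m_\Bool$, each $\psi_i$ a Boolean function over at most $m_\Bool$ variables whose syntactic size is bounded in terms of $m_\Bool$ only; $\calC'_\Real$ consists of $m_\Real$ linear equations over at most $m_\Real$ fresh variables, of total size $O(m_\Real^2)$; and step~4 adds at most $m_\Real$ bound atoms of constant size each. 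Summing, $|\calP(\calC)|\le c_\varphi$ for a constant $c_\varphi$ independent of $\calC$, hence of the trace length; to also meet the literal inequality $|\calP(\calC)|\le|\calC|$ one lets $\calP$ leave $\calC$ untouched whenever $|\calC|\le c_\varphi$. Together with Theorem~\ref{th:perfectness} this yields an atemporal constant-memory monitor for $\LolaBLA$ that is sound but, in general, not perfect. The main obstacle I anticipate is the soundness of step~4: one must argue carefully that the fresh $\LolaLA$ variables stay tied to $\calC$ through the shared relevant variables ${\R}_\Real$ occurring in $\calC'_\Real$, so that a bound valid over $\calC\cup\calC'_\Real$ is genuinely satisfied by the $v_i$-values a model of $\calC$ induces; a related point is making precise which Boolean combinations step~2 must retain so that $\calC'_\Bool$ over-approximates the ${\R}_\Bool$-projection of the mixed constraint set $\calC$.
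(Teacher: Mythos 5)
Your proposal is correct and follows essentially the same route as the paper's proof: split $\R$ into its Boolean and Real parts, invoke the perfectness of the $\LolaB$ and $\LolaLA$ prunings (the latter on $\calC^{\mathcal{LE}}\subseteq\calC$ only), observe that the bounds $l_i\leq v_i\leq g_i$ are entailed by $\calC\cup\calC'_\Real$ so adding them loses no solutions, and bound $|\calC'_\Bool|$ and $|\calC''_\Real|$ by constants depending only on the number of streams. Your explicit witness-gluing phrasing and the remark about forcing $|\calP(\calC)|\leq|\calC|$ literally are just more detailed renderings of the same argument.
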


Note that with the \LolaBLA fragment we can also support if-then-else 
expressions.
A definition $s = ite(c,t,e)$ can be rewritten to handle $s$ as an
input stream adding assumption
$(c \land s = t) \lor (\neg c \land s = e)$.
After applying this strategy the specification is within the
$\LolaBLA$ fragment and as a consequence the sound (but imperfect)
pruning algorithms from there can be applied.

\subsection{Temporal assumptions}

We study now how to handle temporal assumptions.
Consider again Example~\ref{ex:loadPrune}, but instead of the assumption
$0 \leq ld[\NOW] \leq 10$ take
$0.9 * ld[-1,0] \leq ld[\NOW] \leq 1.1 * ld[-1,100]$.
In this case it would not be possible to apply the presented pruning
algorithms. In the pruning process at time 1 we would rewrite our
formulas in a fashion that they do not contain $ld^1$ anymore, but at time 2
we would receive the constraint 
$0.9 * ld^{1} \leq ld^2 \leq 1.1 * ld^{1}$
from the assumption.

Pruning strategies can be extended to consider variables which may be referenced by
input constraints at a later time as relevant variables, hence they
will not be pruned.
A monitor which receives constraint sets over the last $l$ instants is
called an $l$-lookback monitor.
An atemporal monitor is therefore a 0-lookback monitor.
For an $l$-lookback monitor the number of variables that are
referenced at a later timestamp is constant, so our pruning strategies
remain constant.
Hence, the following theorem is applicable to our pruning strategies
and as a consequence our solutions for atemporal monitors can be
adapted to $l$-lookback monitors (for constant $l$).
\begin{theorem}
   \label{th:perfectness2}
   Given a \Lola specification $\varphi$ and a constant pruning strategy $\calP$ for
   $\Expr^\Bool_\varphi$ there is a constant-memory $l$-lookback
   monitor $M_\varphi$ such that
   \begin{compactitem}
    \item $M_\varphi$ is sound if the pruning strategy is sound. 
    \item $M_\varphi$ is perfect if the pruning strategy is perfect.
   \end{compactitem}
\end{theorem}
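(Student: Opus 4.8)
The plan is to reduce Theorem~\ref{th:perfectness2} to Theorem~\ref{th:perfectness} by enlarging the set of relevant variables: in addition to the verdict variables at the current instant, the monitor also retains a bounded temporal window of instant variables that future readings, assumptions, or specification equations could still reference. Over this enlarged relevant set the $l$-lookback monitor behaves exactly like the atemporal monitor of Theorem~\ref{th:perfectness}.

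Concretely, I would run Alg.~\ref{alg:symonline} with the following relevant set at instant $t$. Since $\varphi$ is in $-1/0$ normal form, the equations $\semSym{\varphi}^{t+1}$ added at the next step mention only instant variables of instants $t$ and $t+1$; and since the monitor is an $l$-lookback monitor, every reading $\psi^{t+k}$ and instantiated assumption $\sem{A^{t+k}}_\varphi$ with $k\geq 1$ mentions only instant variables of instants $t+k-l,\ldots,t+k$, hence from the past only instants $\geq t-l+1$. Therefore it suffices to keep relevant, before pruning in Line~7, the window
\[
  \calR^t \;=\; \{\, x^{t'} \mid x\in I\cup O,\ \max(0,\,t-L+1)\leq t'\leq t \,\},
\]
where $L=\max(l,1)$ (or $L=\max(l,1,d)$ for the extended formulation of Definition~\ref{def:SoundPerfectMon}); note this already contains the verdict variables $y^t$ for $y\in O$. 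The monitor then maintains $\calC^t=\calP(\calC^{t-1}\cup\semSym{\varphi}^t\cup\sem{A^t}_\varphi\cup\psi^t)$ with relevant set $\calR^t$, and emits its verdict in Line~6 of Alg.~\ref{alg:symonline}. Because $\varphi$, $l$ and $d$ are fixed, $|\calR^t|\leq L\cdot|I\cup O|$ is a constant independent of $t$; hence constancy of $\calP$ gives $|\calC^t|\leq c$ for a fixed $c$, so the monitor is constant-memory.

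For soundness and perfectness I would replay the argument of Theorem~\ref{th:perfectness} in this setting. With $\Phi^t=\bigcup_{i\leq t}(\semSym{\varphi}^i\cup\sem{A^i}_\varphi\cup\psi^i)$ and, for a constraint set $D$, writing $\sem{D}_{\calR}$ for the set of value tuples of the variables in $\calR$ that entail $D$ (as in Definition~\ref{def:pruningStrat}), the invariant to prove by induction on $t$ is $\sem{\calC^t}_{\calR^t}=\sem{\Phi^t}_{\calR^t}$ in the perfect case, and $\sem{\Phi^t}_{\calR^t}\subseteq\sem{\calC^t}_{\calR^t}$ in the sound case. The induction step uses that every formula added at instant $t+1$ mentions only instant variables that lie in $\calR^t$ --- which $\calC^t$ still contains, precisely because they were kept relevant --- together with genuinely new instant variables of instant $t+1$, and that $\calR^{t+1}$ restricted to instants $\leq t$ is contained in $\calR^t$. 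Hence, by the induction hypothesis, the pre-pruning set $\calC^t\cup\semSym{\varphi}^{t+1}\cup\sem{A^{t+1}}_\varphi\cup\psi^{t+1}$ and $\Phi^{t+1}$ induce the same $\calR^{t+1}$-tuples (resp. the former's $\calR^{t+1}$-tuples include the latter's), and soundness, resp. perfectness, of $\calP$ then propagates this to $\calC^{t+1}$ by Definition~\ref{def:pruningStrat}. Since the verdict at $t+1$ for any predicate $\alpha$ over instant variables of the last $d$ instants depends on $\calC^{t+1}$ only through $\sem{\cdot}_{\calR^{t+1}}$, Definition~\ref{def:SoundPerfectMon} is met.

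I expect the main obstacle to be the index bookkeeping in this induction step: one must verify that the ``interface'' of instant variables shared between two consecutive monitor states is (i)~bounded in size independently of $t$, and (ii)~never discarded by pruning, so that no information a later reading or assumption could still use is lost. Both facts reduce to the $-1/0$ normal form of $\varphi$ (which caps the look-back of the specification equations at one instant) and to $l$ (and $d$) being fixed constants; once they are in place, the renaming and summarization that $\calP$ performs on the remaining non-relevant variables is harmless by exactly the argument used for Theorem~\ref{th:perfectness}, so Theorem~\ref{th:perfectness2} follows with essentially no additional machinery.
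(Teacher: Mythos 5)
Your proposal is correct and follows essentially the same route as the paper: the paper's own (very brief) justification is precisely that variables which may still be referenced by later constraints are added to the relevant set, that for fixed $l$ (and the $-1/0$ normal form) this enlarged set stays of constant size, and that the soundness/perfectness induction of Theorem~\ref{th:perfectness} then goes through unchanged. Your explicit window $\calR^t$ and the restated invariant $\sem{\Phi^t}_{\calR^t}\subseteq\sem{\calC^t}_{\calR^t}$ (with equality in the perfect case) just make that sketch precise.
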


\section{Implementation and Empirical Evaluation}
\label{sec:evaluation}

We have developed a prototype implementation of the symbolic
algorithm for past-only \Lola in Scala, using Z3~\cite{moura08Z3} as solver.
Our tool supports Reals and Booleans with their standard operations,
ranges (e.g. $[3,10.5]$) and $?$ for unknowns.
Assumptions can be encoded using the keyword
\texttt{ASSUMPTION}.\footnote{Note that for our symbolic approach
  assumptions can indeed be considered as a stream specification of
  type Boolean which has to be true at every time instant.}
Our tool performs pruning (Section~\ref{subsec:PruningLOLAFragments})
at every instant, printing precise outputs when possible. If an output
value is uncertain the formula and a range of possible values is
printed.
%

We evaluated two realistic case studies, a test drive data emission
monitoring~\cite{KohlHB18} and an electrocardiogram (ECG) peak
detector~\cite{GorostiagaS21}.
All measurements were done on a 64-bit Linux machine with an Intel
Core i7 and 8~GB RAM.
We measured the processing time of single events in our evaluation,
for inputs from 0 up to 20\% of uncertain values, resulting in
average of 25 ms per event (emissions case study) and 97 ms per event
(ECG).
In both cases the runtime per event did not depend on the length of
the trace (as predicted theoretically).
%
%
The longer runtime per event in the second case study is explained
because of the window of size 100 which is unrolled to 100 streams,
and using Z3 naively to deduce bounds of unknown variables.
We discuss the two case studies separately.

\subsubsection*{Case study \#1: Emission Monitoring}
The first example is a specification that receives test drive data
from a car (including speed, altitude, NOx emissions,\ldots)
from~\cite{KohlHB18}.
The \Lola specification is within $\LolaBLA$ (with $ite$), and checks
several properties, including \texttt{trip\_valid} which captures if
the trip was a valid test ride.
The specification contains around 50 stream definitions in total.
We used two real trips as inputs, one where the allowed NOx emission
was violated and one where the emission specification was satisfied.
%

%
We injected uncertainty into the two traces by randomly selecting
$x\%$ of the events and modifying the value within an interval 
$\pm y\%$.
The figure on the left shows the result of executing this experiment
for all integer combinations of $x$ and $y$
\begin{wrapfigure}[8]{l}{0.4\linewidth}
  \centering
\vspace{-2em}  
\includegraphics[width=1.05\linewidth]{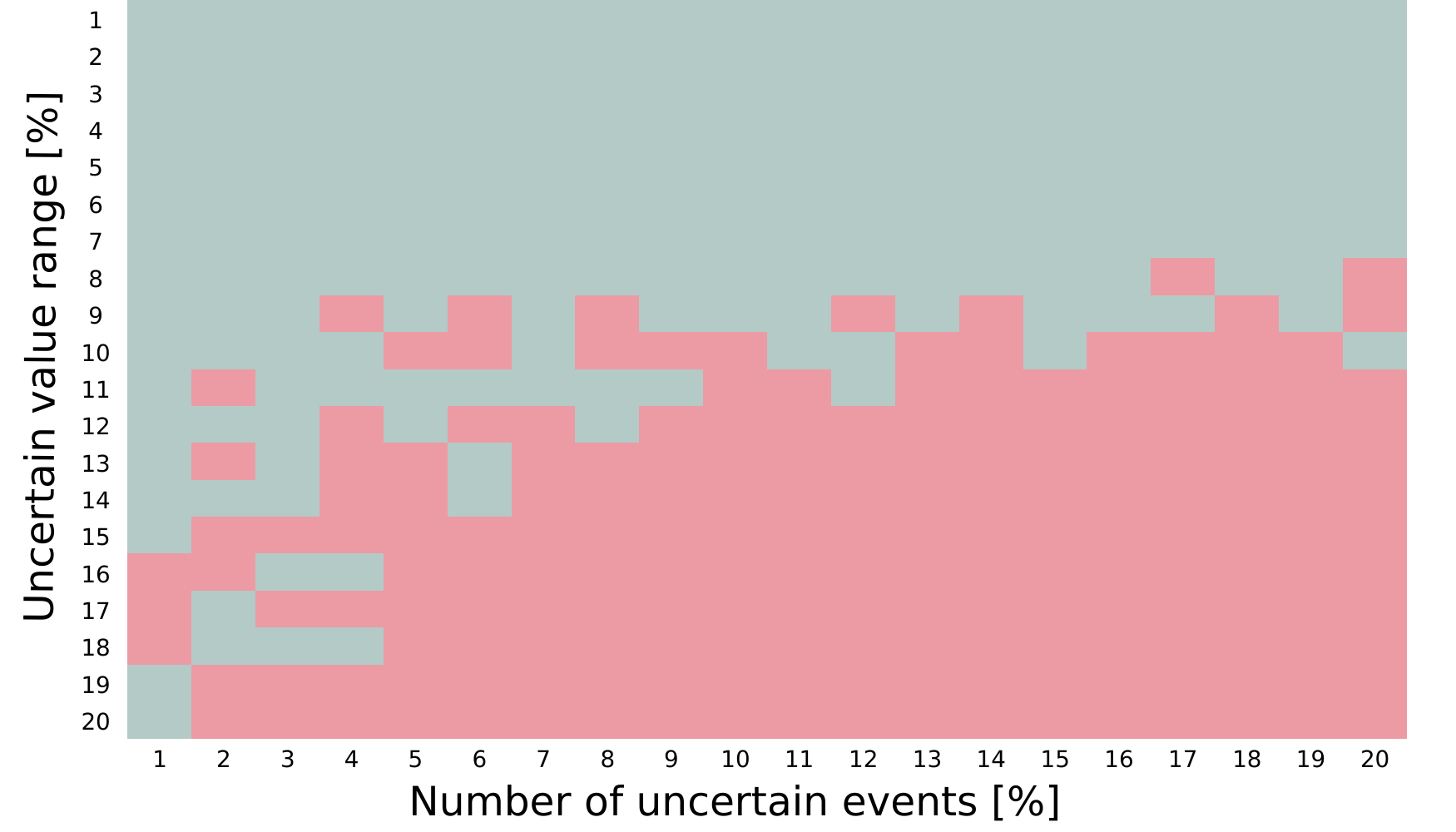}
\end{wrapfigure}
between $1$ and $20$, for one trace.
The green space represents the cases for which the monitor computed
the valid answer and the red space the cases where the monitor
reported unknown.
In both traces, even with $20\%$ of incorrect samples within an interval
of $\pm 7\%$ around the correct value the monitor was able to compute
the correct answer.
We also compared these results to the value-range approach, using
interval arithmetic. However, the final verdicts do not differ here. 
Though the symbolic approach is able to calculate more precise
intermediate results, these do not differ enough to obtain different
final Boolean verdicts.

%
As expected, for fully unknown values and no assumptions, neither the
symbolic nor the interval approaches could compute any certain
verdict, because the input values could be arbitrarily large.
%
However, in opposite to the interval approach, the symbolic approach allows adding 
assumptions (e.g.  the speed or altitude does not differ much from the previous value).
With this assumption, we received the valid result for
\texttt{trip\_valid} when up to 4\% of inputs are fully uncertain.
In other words, the capability of symbolic monitoring to encode
physical dependencies as assumptions often allows our technique to
compute correct verdicts in the presence of several unknown values.

\subsubsection*{Case Study \#2: Heart Rate monitoring}
Our second case study concerns the peak detection in electrocardiogram
(ECG) signals~\cite{GorostiagaS21}.
%
The specification calculates a sliding average and stores the values
of this convoluted stream in a window of size 100.
Then it checks if the central value is higher than the 50 previous and
the 50 next values to identifying a peak.

%
We evaluated the specification against a ECG trace with 2700 events
corresponding to 14 heartbeats.
We integrated uncertainty into the data in two different ways.
First, we modified $x\%$ percent of the events with deviations of
$\pm y\%$.
Even if $20\%$ of the values were modified with an error of
$\pm 20\%$, the symbolic approach returned the perfect result, while
the abstraction approach degraded over time because of accumulated
uncertainties (many peaks were incorrectly ``detected'', even under
$5\%$ of unknown values with a $\pm 20\%$ error---see front part of traces
in Fig.~\ref{fig:hr2}).
%
%
%
Second, we injected bursts of consecutive errors (? values) of
different lengths into the input data.
The interval domain approach lost track after the first burst and was
unable to recover, while the symbolic approach returned some ? around
the area with the bursts and recovered when new values were received
(see Fig.~\ref{fig:hr2}).

\begin{figure}[t!]
\centering
\begin{minipage}{.5\textwidth}
  \centering
  \includegraphics[width=0.90\linewidth,height=1.8cm]{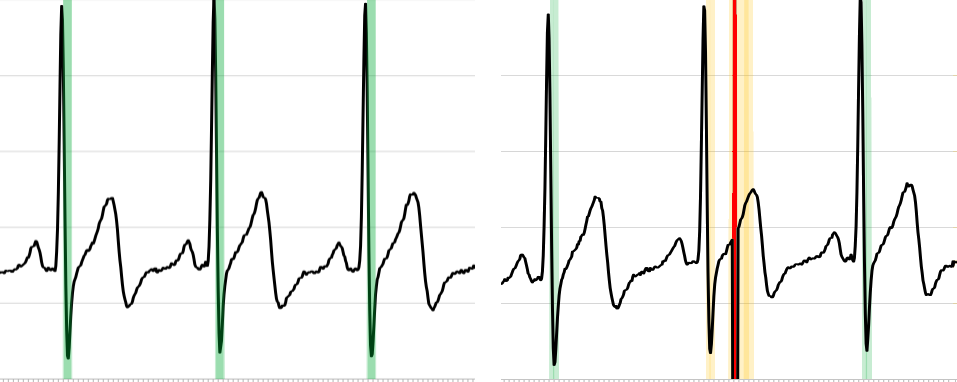}
\end{minipage}%
\begin{minipage}{.5\textwidth}
  \centering
  \includegraphics[width=0.90\linewidth,height=1.8cm]{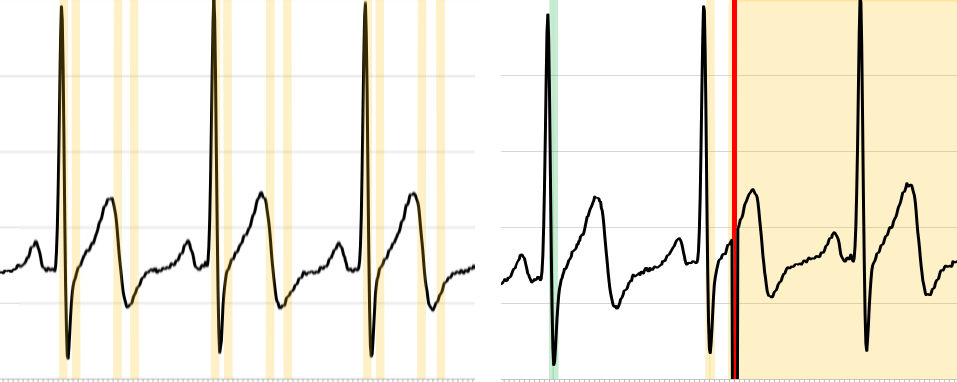}
\end{minipage}
\caption{ECG analysis. Left: Symbolic approach, Right: Value range approach.
Green: Certain heartbeats, Yellow: Potential heartbeats, Red: Bursts of unknown
values.}
\label{fig:hr2}
\end{figure}

%
We exploited the ability of symbolic monitors to handle assumptions by
encoding that heartbeats must be apart from each other more than 160
steps (roughly $0.5$ seconds), which increased the accuracy.
In one example (Fig.~\ref{fig:hr3} in appendix \ref{sec:GraphsAppx})
the monitor correctly detected a peak right after a burst of errors.
The assumption allows the monitor to infer that the unknown burst of
values are below a certain threshold, which enables the detection of
the next heartbeat.
If instead of the assumption heartbeats that are not at least 160
steps apart were simply filtered out, future heartbeats could not be
detected correctly (Fig.~\ref{fig:hr3}(b) in appendix
\ref{sec:GraphsAppx}), because no ranges of the values of unknown
events can be deduced.


%

\vspace{-0.5em}\section{Conclusion}
\label{sec:conclusion}


We have introduced the concept of symbolic monitoring to monitor in the presence of
input uncertainties and assumptions on the system behavior.
We showed theoretically and empirically that symbolic monitoring
is more precise than a straightforward abstract interpretation
approach, and have identified logical theories in which perfect
symbolic approach can be implemented efficiently (constant
monitoring).
Future work includes: (1) to identify other logical theories and their
combinations that guarantee perfect trace length independent
monitoring;
(2) to be able to anticipate verdicts ahead of time for rich data
domains by unfolding the symbolic representation of the specification
beyond, along the lines of
\cite{DBLP:conf/fsttcs/BauerLS06,DBLP:conf/rv/Leucker12,DBLP:conf/nfm/ZhangLD12}
for Booleans;
%
%

Finally, we envision that symbolic monitoring can become a general,
foundational approach for monitoring that will allow to explain many
existing monitoring approaches as instances of the general schema.



\savegeometry{lncs}
 \newgeometry{text={400bp,610bp},
    tmargin=100bp,
    lmargin=112bp, 
   rmargin=112bp, 
   marginparwidth=72bp, 
   marginparsep=12bp}

\bibliographystyle{splncs04}
\bibliography{short}

\addtolength{\textwidth}{-1em}

\loadgeometry{lncs}

\vfill
\pagebreak
\appendix
\section{Further Examples}
\label{app:examples}

\begin{example}
  Consider again $\varphi$ in Example~\ref{ex:one}, with the prefect
  readings in Fig.~\ref{fig:exone}(b).
  We show for time steps $0$, $1$, $2$ an $3$ the equations before and
  after evaluation and simplification, in the upper and lower rows,
  resp.
  \noindent%
  
  \[
    \footnotesize
    \begin{array}{|l|l|l|l|}\hline
      0 & 1 & 2 & 3 \\ \hline
      \LOAD^0=3 & \LOAD^1=4 & \LOAD^2=5 & \LOAD^2=7 \\
      \ACC^0=\LOAD^0 & \ACC^1=\ACC^0+\LOAD^1 &\ACC^2=\ACC^1+\LOAD^2 &\ACC^3=\ACC^2+\LOAD^3-\LOAD^0 \\
      \OK^0=\ACC^0\leq 15 & \OK^1=\ACC^1\leq 15 & \OK^2=\ACC^2\leq 15 & \OK^3=\ACC^3\leq 15 \\ \hline
      & \LOAD^0=3 & \LOAD^0=3,\LOAD^1=4 & \LOAD^1=4,\LOAD^2=5 \\
      \LOAD^0=3 & \LOAD^1=4 & \LOAD^2=5 & \LOAD^3=7 \\ 
      \ACC^0=3 & \ACC^1=7 & \ACC^2=12 & \ACC^3=16 \\
      \OK^0=\TRUE & \OK^1=\TRUE & \OK^2=\TRUE & \OK^3=\FALSE \\ \hline
  \end{array}
\]
All equations are fully resolved at every step.
Also, $\LOAD^0$ is pruned at 3 because $\LOAD^0$
will not be used in the future.
Consider now the imperfect input in~Fig.\ref{fig:exone}(d):
\[
  \footnotesize
    \begin{array}{|l|l|l|l|}\hline
      0 & 1 & 2 & 3 \\ \hline
      1\leq\LOAD^0\leq{}5 & \LOAD^1=4 & \LOAD^2=5 & \LOAD^2=7 \\
      \ACC^0=\LOAD^0 & \ACC^1=\ACC^0+\LOAD^1 &\ACC^2=\ACC^1+\LOAD^2 &\ACC^3=\ACC^2+\LOAD^3-\LOAD^0 \\
      \OK^0=\ACC^0\leq 15 & \OK^1=\ACC^1\leq 15 & \OK^2=\ACC^2\leq 15 & \OK^3=\ACC^3\leq 15 \\ \hline
      1\leq\LOAD^0\leq{}5 & \LOAD^1=4 & \LOAD^2=5 & \LOAD^3=7 \\ 
      & 1\leq\LOAD^0\leq{}5 & 1\leq\LOAD^0\leq{}5,\LOAD^1=4 & \LOAD^1=4,\LOAD^2=5\\
      \ACC^0=\LOAD^0 & \ACC^1=\LOAD^0+4 & \ACC^2=\LOAD^0+9 & \ACC^3=16 \\
      \OK^0=\TRUE & \OK^1=\TRUE & \OK^2=\TRUE & \OK^3=\FALSE \\ \hline
  \end{array}
\]
At time $2$, $\OK^2=\TRUE$ is inferred from
$\{1\leq\LOAD^0\leq{}5,\LOAD^1=4,\ACC^2=\LOAD^0+9,\OK^2=\ACC^2\leq 15\}$.
At time $3$ the dependency to the unknown value $\LOAD^0$ is
eliminated from $\ACC^3=\ACC^2+\LOAD^2-\ACC^0$ by symbolic
manipulation.\qed
\end{example}

\section{Missing Proofs}
\label{app:proofs}

\newcounter{backup}

\setcounter{backup}{\value{theorem}}
\setcounter{theorem}{\value{thm-perfect}}

  \begin{theorem}
   Given a specification $\varphi$ and a constant pruning strategy
   $\calP$ for $\Expr^\Bool_\varphi$, there is an atemporal
   constant-memory monitor $M_\varphi$ s.t.
   \begin{compactitem}
    \item $M_\varphi$ is sound if the pruning strategy is sound. 
    \item $M_\varphi$ is perfect if the pruning strategy is perfect.
   \end{compactitem}
  \end{theorem}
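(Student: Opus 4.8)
\noindent\emph{Proof plan.}
I would exhibit $M_\varphi$ as the step function realised by Alg.~\ref{alg:symonline} with the pruning of Line~7 instantiated by the given strategy $\calP$, using as relevant set at time $t$ the set $R^t=\{\,s^t\mid s\in I\cup O\,\}$ of \emph{all} instant variables of the current time. Concretely, with $\calC^{-1}=\emptyset$, on receiving the readings $\psi^t$ the monitor forms its working set and stores
\[
  D^t \;:=\; \calC^{t-1}\cup\semSym{\varphi}^t\cup\sem{A^t}_\varphi\cup\psi^t, \qquad
  \calC^t \;:=\; \calP(D^t),
\]
emitting as verdict exactly the predicates over $R^t$ that $D^t$ entails (Lines~5--6). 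Since $\calP$ may introduce fresh variables, I would additionally relabel after each step so that the state always ranges over one fixed finite pool of names (with a fixed set of names for $R^t$); this is a purely syntactic relabeling that does not affect entailment and turns $M_\varphi$ into a genuine time-independent step function. Constant memory is then immediate: $|\calC^t|=|\calP(D^t)|\le c$ because $\calP$ is constant, and $|D^t|\le c+c'$ because $\semSym{\varphi}^t$, $\sem{A^t}_\varphi$, and (the monitor being atemporal, one reading per input stream) $\psi^t$ together contribute only a constant number of bounded-size expressions.

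The core step is to show that, under the standing assumptions, the time-$t$ instant variables $R^t$ \emph{separate} the past from the future. Since $\varphi$ is in the assumed normal form with offsets in $\{-1,0\}$, each newly instantiated block $\semSym{\varphi}^{t+1}$ mentions only instant variables at times $t$ and $t+1$, and since the monitor is atemporal, $\sem{A^{t+1}}_\varphi$ and $\psi^{t+1}$ mention only variables at time $t+1$. Iterating, for every $t'\ge t$ one may write $\Phi^{t'}=\Phi^t\cup\Delta$, and likewise derive $D^{t'},\calC^{t'}$ from $\calC^t\cup\Delta$, where the only variables common to the "past part" and $\Delta$ lie in $R^t$. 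A model of $X\cup\Delta$ is precisely a model $\mu$ of $\Delta$ whose restriction to $R^t$ extends to a model of $X$, and these two extensions range over disjoint sets of remaining variables (0-offset equations among outputs stay within a single time slice and cause no trouble). Hence the projection of the models of $X\cup\Delta$ onto the variables of $\Delta$ depends on $X$ only through $\sem{X}_{R^t}$, the projection of the models of $X$ onto $R^t$ --- i.e.\ the quantity $\semR{X}$ of Def.~\ref{def:pruningStrat} with $\R:=R^t$. Consequently, if $\sem{X}_{R^t}=\sem{X'}_{R^t}$ then $X\cup\Delta$ and $X'\cup\Delta$ entail the same predicates over any $R^{t'}$ occurring in $\Delta$; and if $\sem{X}_{R^t}\supseteq\sem{X'}_{R^t}$, then $X\cup\Delta\models\alpha$ implies $X'\cup\Delta\models\alpha$ for such $\alpha$.

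I would then prove soundness/perfectness by induction on $t$, maintaining the invariant $\sem{\calC^t}_{R^t}=\sem{\Phi^t}_{R^t}$ in the perfect case and $\sem{\calC^t}_{R^t}\supseteq\sem{\Phi^t}_{R^t}$ in the sound case (base $t=-1$: both empty). For the step, apply the separation property with $X=\calC^{t-1}$, $X'=\Phi^{t-1}$, $\Delta=\semSym{\varphi}^t\cup\sem{A^t}_\varphi\cup\psi^t$ to obtain $\sem{D^t}_{R^t}=\sem{\Phi^t}_{R^t}$ (resp.\ $\supseteq$); then the defining property of a perfect (resp.\ sound) pruning strategy gives $\sem{\calC^t}_{R^t}=\sem{D^t}_{R^t}$ (resp.\ $\supseteq$), re-establishing the invariant. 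Finally, for any predicate $\alpha$ over instant variables at time $t$: in the perfect case $\sem{D^t}_{R^t}=\sem{\Phi^t}_{R^t}$ gives $D^t\models\alpha\iff\Phi^t\models\alpha$, so the monitor's verdict coincides with what $\Phi^t$ entails, i.e.\ $M_\varphi$ is perfect; in the sound case $\sem{D^t}_{R^t}\supseteq\sem{\Phi^t}_{R^t}$ gives $D^t\models\alpha\Rightarrow\Phi^t\models\alpha$, i.e.\ $M_\varphi$ is sound. (If the verdict is instead to be read off the stored $\calC^t$, the same implications hold via $\sem{\calC^t}_{R^t}=\sem{\Phi^t}_{R^t}$, resp.\ $\supseteq$.)

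The routine parts are the memory bound and the fresh-variable bookkeeping. The main obstacle is the separation property: one must argue carefully that, given the normal-form and atemporality hypotheses, keeping only the projection $\sem{\cdot}_{R^t}$ of the history onto the current instant variables is both sufficient and necessary to preserve every future verdict, being precise about the quantifier structure implicit in $\semR{\cdot}$ (projection of models, not literal entailment of constraints containing free variables) and about why 0-offset equations among output streams do not break the "disjoint remaining variables" argument.
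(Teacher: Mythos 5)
Your proposal is correct and follows essentially the same route as the paper's proof: the same monitor construction $\calC^t=\calP(\calC^{t-1}\cup\semSym{\varphi}^t\cup\sem{A^t}_\varphi\cup\psi^t)$ with relevant set $R^t=\{s^t\mid s\in I\cup O\}$, the same inductive invariant $\sem{\calC^t}_{R^t}\supseteq\sem{\Phi^t}_{R^t}$ (equality in the perfect case), and the same key observation that atemporality plus the $\{-1,0\}$-offset normal form make $R^{t}$ the only interface between the accumulated past and the newly instantiated equations. Your additional remarks on fresh-variable relabeling and on where the verdict is read off are harmless refinements of the same argument.
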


  \begin{proof}
    Let $\varphi = (I,O,E)$ be the atemporal specification.\\\\
    We will show the theorem by constructing a monitor that generates expression sets which satisfy the relation to $\Phi^i$ demanded in \Cref{def:SoundPerfectMon}.\\\\
    In general a monitor $M$ is perfect with respect to $\varphi$, if it generates $M^i$, s.t.
    $\sem{\Phi^i}_{{\R}^i} = \sem{M^{i}}_{{\R}^i}$ and
    sound, if it generates $M'^i$, s.t.
    $\sem{\Phi^i}_{{\R}^i} \subseteq \sem{M^{i}}_{{\R}^i}$ with
    ${\R}^i = \{x^i | x \in I \cup O\}$, the relevant variables for this timestamp. 
    This follows directly from Definition~\ref{def:SoundPerfectMon} and Definition~\ref{def:pruningStrat}.\\\\
    We first consider the case where the pruning strategy $\calP$ (we write $\calP^i$ for the pruning according to ${\R}^i$) is sound:\\\\
    By using $\calP$ we can construct a sound monitor $M$ which generates outputs 
    $M^0=\calP^0(\semSym{\varphi}^0 \cup \sem{A(0)}_\varphi \cup \psi^{0})$ and
    $M^i=\calP^i(M^{i-1} \cup \semSym{\varphi}^i \cup \sem{A^i}_\varphi \cup \psi^{i})$ for $i > 0$.\\
    We will now show $\sem{\Phi^{i}}_{{\R}^i} \subseteq \sem{M^{i}}_{{\R}^i}$ 
    for all $i$, i.e. that $M$ is sound. 
    Afterwards we will argue $M$ is also a constant-memory monitor.\\\\
    For $i=0$:\\
    By Definition~\ref{def:pruningStrat}:
    $\sem{\Phi^0}_{{\R}^0} \subseteq \sem{\calP^0(\Phi^0)}_{{\R}^0}$.\\
    And by Definition of $\Phi^0$:
    $\sem{\calP^0(\Phi^0)}_{{\R}^0} = \sem{\calP(\semSym{\varphi}^0 \cup \sem{A(0)}_\varphi \cup \psi^{0})}_{{\R}^0} = \sem{M^0}_{{\R}^0}$.\\\\
    Further, for $i > 0$:\\
    $\sem{\Phi^i}_{{\R}^i} =
    \sem{\Phi^{i-1} \cup \semSym{\varphi}^i \cup \sem{A^i}_\varphi \cup \psi^i}_{{\R}^i} \subseteq
    \sem{M^{i-1} \cup \semSym{\varphi}^i \cup \sem{A^i}_\varphi \cup \psi^i}_{{\R}^i}$.\\
    This is because of the atemporality of the monitor and the flattened form of the specification all common variables of $\Phi^{i-1}$ and 
    $\semSym{\varphi}^i \cup \sem{A^i}_\varphi \cup \psi^i$ are from ${\R}^{i-1}$ for which
    $\sem{M^{i-1}}_{{\R}^{i-1}}$ is known to be a superset of  $\sem{\Phi^{i-1}}_{{\R}^{i-1}}$.
    Hence if for any $\alpha$ we have 
    $\alpha \models \semSym{\varphi}^i \cup \sem{A^i}_\varphi \cup \psi^i$ and 
    $\alpha \models \Phi^{i-1}$ 
    then also 
    $\alpha \models M^{i-1}$.
    Thus, if
    $\alpha \models \Phi^{i-1} \cup \semSym{\varphi}^i \cup \sem{A^i}_\varphi \cup \psi^i$ 
    then
    $\alpha \models M^{i-1} \cup \semSym{\varphi}^i \cup \sem{A^i}_\varphi \cup \psi^i$.
    This fact together with the definition of $\sem{\cdot}_{{\R}^i}$ (Definition~\ref{def:pruningStrat}) implies the subset relation above.\\
    Furthermore
    $\sem{\Phi^i}_{{\R}^i} \subseteq
     \sem{\calP^i(M^{i-1} \cup \semSym{\varphi}^i \cup \sem{A^i}_\varphi \cup \psi^i)}_{{\R}^i} = 
     \sem{M^i}_{{\R}^i}$ again by
    definition of $\calP^i$ (Definition~\ref{def:pruningStrat}).\\
    Hence, for all outputs of $M$ we have
    $\sem{\Phi^{i}}_{{\R}^i} \subseteq \sem{M^{i}}_{{\R}^i}$ for all $i$ and thus $M$ is sound.\\\\
    Note that for all $M^{i}$ we have $|M^{i}| \leq c$ due to the  constant pruning strategy. However $M$ only has to store the $M^{i}$ 
    from the last step and as consequence it is constant-memory monitor.\\\\
    The case where the pruning strategy $\calP$ is perfect is analogous.
  \qed
\end{proof}

\setcounter{theorem}{\value{backup}}

\setcounter{backup}{\value{lemma}}
\setcounter{lemma}{\value{lem-lolaBperfect}}

\begin{lemma}
  The  $\LolaB$ pruning strategy is perfect and constant.
\end{lemma}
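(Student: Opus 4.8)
\noindent The plan is to verify directly the two requirements of Definition~\ref{def:pruningStrat} for the $\LolaB$ pruning strategy: perfectness ($\semR{\calC}=\semR{\calP(\calC)}$) and constancy ($|\calP(\calC)|$ bounded by a constant). First I would fix notation. In this fragment every instant variable ranges over the finite domain $\BB$; the relevant set ${\R}=\{r_1,\dots,r_m\}$ consists of the output-instant variables of the current time point, so $m\le|O|$ depends only on $\varphi$; the non-relevant variables are the remaining instant variables together with fresh variables introduced by earlier pruning steps. Writing $\gamma\DefinedAs\bigwedge\calC$, the set $\semR{\calC}$ is the finite set of assignments $(v_1,\dots,v_m)\in\BB^m$ to ${\R}$ that extend to a model of $\gamma$, and by construction this is exactly the set of columns of the value table $T$. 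If $\calC$ is unsatisfiable then $T$ is empty, the strategy returns $\{\FALSE\}$, and $\semR{\{\FALSE\}}=\emptyset=\semR{\calC}$, so this case is immediate; below I assume $c\ge 1$ columns.

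For perfectness I would argue as follows. Let $c=|T|$ and $k=\lceil\log c\rceil$, so $2^k\ge c$. The strategy introduces fresh variables $v_1,\dots,v_k$ and, for each $i$, a Boolean formula $\psi_i$ over them, chosen so that the map $\BB^k\to\BB^m$, $\vec v\mapsto(\psi_1(\vec v),\dots,\psi_m(\vec v))$, has image exactly the $c$ columns of $T$: enumerate the columns $\vec u^{(1)},\dots,\vec u^{(c)}$, fix an injection of $\{1,\dots,c\}$ into $\BB^k$ (possible precisely because $c\le 2^k$, which is what $k=\lceil\log c\rceil$ guarantees), send the $j$-th chosen bit pattern to $\vec u^{(j)}$, and send the remaining $2^k-c$ patterns to, say, $\vec u^{(1)}$. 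Then $\calP(\calC)=\{r_i=\psi_i(v_1,\dots,v_k)\mid 1\le i\le m\}$ mentions only ${\R}\cup\{v_1,\dots,v_k\}$, so its models are exactly the arbitrary assignments to $\vec v$ paired with $r_i=\psi_i(\vec v)$; hence $\semR{\calP(\calC)}$ is the image of the above map, which equals the set of columns of $T$, which equals $\semR{\calC}$. Since the surplus bit patterns reuse columns already present, nothing new is added.

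For constancy I would note that every relevant quantity is bounded in terms of $\varphi$ alone: $m\le|O|$, the number of columns satisfies $c\le 2^m$, hence $k=\lceil\log c\rceil\le m$, and each $\psi_i$ can be written (for instance as a disjunction of the minterms over $v_1,\dots,v_k$ on which it is true) with size $O(2^k k)$. Therefore $|\calP(\calC)|=\sum_{i=1}^m|r_i=\psi_i|\le m\cdot O(2^m m)$, a constant $c_\varphi$ depending only on $\varphi$ and independent of $\calC$ and of the trace length. To also satisfy the basic non-expansiveness requirement $|\calP(\calC)|\le|\calC|$ it suffices to let $\calP$ act as the identity whenever $|\calC|\le c_\varphi$, which affects neither perfectness nor the bound.

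The main obstacle, and essentially the only non-mechanical point, is the packing construction of the $\psi_i$: one has to check both that $c$ distinct column-vectors can be realized over $\lceil\log c\rceil$ fresh Boolean bits (needing $c\le 2^{\lceil\log c\rceil}$, which is also tight, since fewer bits cannot index $c$ distinct patterns, so no representation could be smaller here) and that routing the $2^k-c$ surplus bit patterns back onto already-present columns does not enlarge $\semR{\calP(\calC)}$. The remaining steps — the bound $c\le 2^{|O|}$, the size estimate for each $\psi_i$, and the identification of $\semR{\cdot}$ with the set of table columns — follow directly from the definitions and the finiteness of $\BB$.
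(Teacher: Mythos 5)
Your proof follows essentially the same route as the paper's: identify $\semR{\calC}$ with the set of columns of the value table $T$, encode the $c$ columns via $\lceil\log c\rceil$ fresh Boolean variables to get perfectness, and bound $|\calP(\calC)|$ by $m\cdot O(2^m m)$ with $c\le 2^m$ and $m$ fixed by the (flattened) specification. You are in fact somewhat more careful than the paper, spelling out the packing of surplus bit patterns onto existing columns and handling the non-expansiveness condition $|\calP(\calC)|\le|\calC|$ explicitly, neither of which the paper's proof addresses.
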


\begin{proof}
  Let $\calC$ be any constraint set over
  $\mathcal{V} = \{s_1,\dots,s_n\} \cup {\R}$ with relevant variables
  $\calR = \{r_1,\dots,r_m\}$ and $\calC'$ the set obtained after
  pruning.\\\\
  Clearly $\semR{\calC} = \semR{\calC'}$ since by definition
  $\semR{\calC} = \{(v_1,\dots,v_n) | (r_1 = v_1)
  \land \dots \land (r_n = v_n) \models \calC \}$.
  We only add value combinations to $T$ which fulfill
  $(r_1 = v_1) \land \dots \land (r_m = v_m) \models \gamma$ and the
  $\psi_i$ in $\calC'$ by definition just allow exactly these
  combinations.\\\\
  Moreover, the value table from which $\calC'$ is created has $m$
  rows and $c$ columns.\\
  Hence $\calC'$ contains $m$ formulas over at most $\lceil \log(c) \rceil$
  variables.\\
  According to our measure we have for every $r_i = \psi_i$ from
  $\calC'$
  $|r_i = \psi_i| = 1 + |\psi_i| \leq 1 + \lceil \log(c) \rceil * 2^{\lceil \log(c) \rceil} \leq
  c^2 + 1$ and consequently $|\calC'| \leq m * (c^2 + 1)$.\\
  Note that for $c$ we have $c \leq 2^m$ (number of columns in the
  table) and $m$ is the number of streams in the flattened
  specification and hence constant.\qed
\end{proof}

\setcounter{lemma}{\value{backup}}

\setcounter{backup}{\value{lemma}}
\setcounter{lemma}{\value{lem-lolaLAperfect}}

\begin{lemma}
The $\LolaLA$ pruning strategy is perfect and constant.
\end{lemma}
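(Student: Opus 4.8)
The plan is to mirror the proof of Lemma~\ref{lem:lolaBperfect}: first show that the $\LolaLA$ pruning strategy is perfect, i.e.\ $\semR{\calC}=\semR{\calP(\calC)}$ for every linear constraint set $\calC$, and then bound $|\calP(\calC)|$ by a constant that depends only on the (flattened) specification.

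For perfectness I would reduce everything to a statement about affine subspaces. If $\calC$ is inconsistent the strategy returns $\{0=1\}$, and both $\semR{\calC}$ and $\semR{\{0=1\}}$ are empty, so there is nothing to show. Otherwise, by Gaussian elimination $\calC$ is equivalent to a system of the matrix form $r = N s + o$ in which $r=(r_1,\dots,r_m)$ are the relevant variables, $s=(s_1,\dots,s_n)$ is a tuple of genuinely free variables ranging independently over all of $\mathbb{R}^n$ (free variables that occur in no $r_i$-equation can be dropped, and the case of no free variables is the degenerate $r=o$), and $N\in\mathbb{R}^{m\times n}$, $o\in\mathbb{R}^m$ have constant entries. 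By definition $\semR{\calC}$ is precisely the set of $r$-tuples that extend to a solution, which is the affine subspace $o+\mathrm{col}(N)$, where $\mathrm{col}(N)$ is the column span of $N$. The pruning step replaces $N$ by an $m\times\rho$ matrix $N'$ with $\rho=\mathrm{rank}(N)\le\min\{m,n\}$ and $\mathrm{col}(N')=\mathrm{col}(N)$ (e.g.\ take $\rho$ linearly independent columns of $N$), and sets $\calC'$ to be the system $r = N' u + o$ over $\rho$ fresh variables $u=(u_1,\dots,u_\rho)$. Since introducing fresh (non-relevant) variables does not change the value tuples achievable by the $r_i$, we obtain $\semR{\calC'}=o+\mathrm{col}(N')=o+\mathrm{col}(N)=\semR{\calC}$, which is perfectness.

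For constantness I would simply count symbols: $\calC'$ consists of exactly $m$ equations, each of the form $r_i = \sum_{j=1}^{\rho} c_{ij}\,u_j + o_i$ with $\rho\le m$, so under the size measure fixed earlier in the paper each equation has size $O(m)$ and hence $|\calP(\calC)|=|\calC'|=O(m^2)$. Since $m$ is the number of output streams of the flattened specification, this is a constant independent of $\calC$, which is the required uniform bound. Together with Theorem~\ref{th:perfectness} this gives a perfect atemporal constant-memory monitor for $\LolaLA$.

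The main obstacle is not the linear algebra itself but making the reduction step airtight: one has to argue that Gaussian elimination really yields a parametrization whose free variables range independently over all of $\mathbb{R}^n$ (so that $\semR{\calC}$ is exactly the affine span and not merely a subset of it), handle the degenerate cases (inconsistent system, no free variables, free variables appearing in no $r_i$-equation) uniformly, and ensure that fresh variables created by one pruning step are treated as ordinary free non-relevant variables by the next step, so that the count $m$ of relevant variables — and with it the size bound — does not drift as the monitor runs.
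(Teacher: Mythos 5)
Your proposal is correct and follows essentially the same route as the paper's proof: perfectness via replacing the coefficient matrix $N$ by a rank-many-column matrix $N'$ with the same column span (hence the same set of reachable $(r_1,\dots,r_m)$-tuples, with the inconsistent case handled by $\{0=1\}$), and constantness by counting $m$ equations of size $O(m)$ each, giving an $O(m^2)$ bound with $m$ fixed by the flattened specification. Your affine-subspace formulation merely makes explicit what the paper compresses into ``equivalence transformations preserving the solutions for $(r_1,\dots,r_m)$,'' so no substantive difference remains.
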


\begin{proof}
  Let $\calC$ be any constraint set over
  $\calV = \{s_1,\dots,s_n\} \cup {\R}$ with relevant variables
  $\calR = \{r_1,\dots,r_m\}$ and $\calC'$ the set obtained after
  pruning.\\\\
  The strategy is clearly perfect.
  If $\calC$ did not have solutions we return a $\calC'$ which also
  has no solutions.
  If $\calC$ has solutions we use equivalence transformations of the
  system of equations preserving the solutions for $(r_1,\dots,r_m)$,
  hence
  $\semR{\calC} = \{(v_1,\dots,v_n) | (r_1 = v_1)
  \land \dots \land (r_n = v_n) \models \calC \} = \semR{\calC'}$.\\\\
  Note that $N'$ is an $m \times r$ matrix with $r \leq m$.
  Hence in $\calC'$ there are $m$ expressions of the form
  $r_i = \sum_{j=1}^r c_{i,j} v_j + c_i$ with
  $|r_i = \sum_{j=1}^r c_{i,j} v_j + c_i| = 2r + 2$ and hence
  $|\calC'| = m * (2r + 2) \leq 2m^2 + 2m$.
  The value $m$ is the number of streams in the flattened
  specification and hence constant.\qed
\end{proof}

\setcounter{lemma}{\value{backup}}

\setcounter{backup}{\value{lemma}}
\setcounter{lemma}{\value{lem-lolaBLA}}

 \begin{lemma}
  The \LolaBLA pruning strategy is sound and constant.
\end{lemma}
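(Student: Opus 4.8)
The plan is to derive both claims from the already-established properties of the $\LolaB$ and $\LolaLA$ pruning strategies (Lemmas~\ref{lem:lolaBperfect} and~\ref{lem:lolaLAperfect}), handling the three building blocks of the $\LolaBLA$ strategy one at a time and then observing that the pieces it produces live over disjoint sets of variables. Throughout, fix a constraint set $\calC$ over $\calV = \{s_1,\dots,s_n\}\cup\R$ with $\R = \R_\Bool\cup\R_\Real$, and let $\calC' = \calC'_\Bool\cup\calC''_\Real$ be its pruning.

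\emph{Soundness.} I would show that every constraint placed into $\calC'$ is a consequence of $\calC$ once the non-relevant variables are projected away, which is exactly $\semR{\calC}\subseteq\semR{\calC'}$. For $\calC'_\Bool$: step~2 runs the $\LolaB$ strategy on the Boolean-typed constraints of $\calC$, viewing each atom built over real terms opaquely; since that strategy is perfect on what it sees and only forgets constraints that mention the real part, the induced map can only enlarge the projection onto $\R_\Bool$, so it is sound. For $\calC'_\Real$: step~3 runs the $\LolaLA$ strategy on $\calC^{\mathcal{LE}}\subseteq\calC$; by Lemma~\ref{lem:lolaLAperfect} this is perfect on that subsystem, hence $\sem{\calC}_{\R_\Real}\subseteq\sem{\calC^{\mathcal{LE}}}_{\R_\Real}=\sem{\calC'_\Real}_{\R_\Real}$, the first inclusion because dropping constraints only enlarges the solution set. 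For the bounds added in step~4: given a relevant real tuple $\bar v\in\semR{\calC}$, pick a witnessing total assignment $\beta\models\calC$; because each fresh variable $v_i$ is a fixed linear combination of the $s_j$, $\beta$ extends canonically to the $v_i$ and then still satisfies $\calC'_\Real$ (the $\LolaLA$ rewriting being an equivalence transformation of $\calC^{\mathcal{LE}}$), so $\beta\models\calC\cup\calC'_\Real$; consequently $v_i\in[l_i,g_i]$ under $\beta$, and $\bar v$ together with these values satisfies $\calC''_\Real$. Finally, $\calC'_\Bool$ and $\calC''_\Real$ share no variables (Boolean relevant variables versus real relevant variables and pairwise-distinct fresh variables), so the Boolean and real witnesses combine, giving $\bar v\in\semR{\calC'}$ and hence soundness.

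\emph{Constant size.} Here the argument is additive. By Lemma~\ref{lem:lolaBperfect}, $|\calC'_\Bool|$ is bounded by a constant depending only on the number of Boolean streams of the flattened specification; by Lemma~\ref{lem:lolaLAperfect}, $|\calC'_\Real|$ is bounded by a constant depending only on the number of real streams, and the number $k$ of fresh variables it introduces is at most the corresponding rank bound, hence constant; the $2k$ extra constraints $l_i\leq v_i\leq g_i$ have size $O(1)$ each, so $|\calC''_\Real|\leq|\calC'_\Real|+O(k)$ is still constant. Summing, $|\calC'|$ is bounded by a constant independent of the trace length and of $\calC$, and the size-non-increasing requirement of Definition~\ref{def:pruningStrat} is met by leaving $\calC$ untouched whenever it is already below that constant.

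\emph{Main obstacle.} The delicate point is the interaction between the separately-pruned Boolean and real parts, and in particular justifying step~4: one must verify that the (possibly over-approximating) computation of $l_i,g_i$ over $\calC\cup\calC'_\Real$ never discards a relevant tuple that was genuinely realizable in $\calC$, and this is exactly where the fact that the fresh variables are \emph{defined} linear combinations — so any model of $\calC$ canonically induces a model of $\calC'_\Real$ — does the work. This is also the precise spot where perfectness is lost: pruning the real part forgets $\calC\setminus\calC^{\mathcal{LE}}$ and pruning the Boolean part forgets the linear equations, which is unavoidable by the $\LolaBLA$ counterexample of Fig.~\ref{fig:polygon}; accordingly the proof only needs the one-sided inclusions everywhere and never claims equality of the $\sem{\cdot}_{\R}$-images.
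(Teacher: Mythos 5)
Your proposal is correct and follows essentially the same route as the paper's proof: invoke Lemmas~\ref{lem:lolaBperfect} and~\ref{lem:lolaLAperfect} for the two sub-prunings, observe that any model of $\calC$ extends to one of $\calC\cup\calC'_\Real$ so the computed bounds $l_i\leq v_i\leq g_i$ are automatically satisfied, and combine via the disjointness of the fresh variables; the constant-size bound is likewise obtained additively from the two lemmas plus $O(k)$ for the interval constraints. Your explicit handling of the $|\calP(\calC)|\leq|\calC|$ requirement and the witness-extension for step~4 are slightly more detailed than the paper's, but not a different argument.
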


\begin{proof}
  Let $\calC$ be any constraint set over
  $\calV = \{s_1,\dots,s_n\} \cup {\R}$ with relevant variables
  $\R = \{r_1,\dots,r_m\}$ and $\calC'$ the set obtained after pruning.\\\\
  It follows from Lemma~\ref{lem:lolaBperfect} that $\calC'_\Bool$
  is a perfect pruning of $\calC$ for ${\R}_\Bool$ and from
  Lemma~\ref{lem:lolaLAperfect} that $\calC'_\Real$ is a
  perfect pruning of $\calC^\mathcal{LE}$ for ${\R}_\Real$.\\
  Since
  $\calC \cup \calC'_\Real \models \{l_i \leq v_i \leq g_i | 1 \leq i
  \leq k\}$ by definition of $l_i, g_i$, we have:
  $\semR{\calC'} = \sem{\calC'_\Bool \cup \calC''_\Real}_{\R} =
  \sem{\calC'_\Bool \cup \calC'_\Real \cup \{l_i \leq v_i \leq g_i | 1 \leq i \leq k\}}_{\R}
  \supseteq \semR{\calC'_\Bool \cup \calC'_\Real \cup \calC}$.\\\\
  We also have that, $\calC'_\Bool$ and $\calC$ only share the variables ${\R}_\Bool$ 
  (because our pruning strategy for \LolaB pruned all others away and only introduced fresh variables). 
  Furthermore $\sem{\calC'_\Bool}_{\R_\Bool} = \sem{\calC}_{\R_\Bool}$.
  Thus it follows for all expressions $\alpha$ over $\R$, that if 
  $\alpha \models \calC$ then $\alpha \models \calC'_\Bool$.\\
  The same reasoning holds for $\calC'_\Real$ and $\calC$, i.e. 
  $\alpha \models \calC$ then $\alpha \models \calC'_\Real$ for all expressions $\alpha$ over $\R$.\\ 
  Hence, it follows:
  $\semR{\calC'_\Bool \cup \calC'_\Real \cup \calC} \supseteq \semR{\calC}$: 
  The \LolaBLA pruning strategy is sound. \\\\
  Moreover, the pruning strategy is also constant:
  From Lemmas~\ref{lem:lolaBperfect} and \ref{lem:lolaLAperfect} ,
  $\calC'_\Bool$ and $\calC'_\Real$ have constant upper
  bounds.
  The number of fixed-sized constraints added in step 3 is bounded by
  the number of fresh variables in the \LolaLA pruning strategy which
  is bounded by $m$. The value $m$ is the number of streams in the flattened
  specification and hence constant.
  Thus the size of $\calC'$ has a constant upper bound as well.\qed
\end{proof}

\setcounter{lemma}{\value{backup}}

\section{Graphs from evaluation runs}
\label{sec:GraphsAppx}
\hannestodo{Feel free to remove some figures again if you do not like them}

\newcommand{\nonfloatcaption}[1]{\par\ \\\raggedright\textbf{Figure} {#1}}

\subsection{Emission Monitoring: Verdicts for uncertain inputs}

\begin{minipage}{0.99\textwidth}
\centering
 \begin{minipage}{.5\textwidth}
   \centering
    \includegraphics[width=\linewidth]{png-plots/emiss_empty_ov.pdf}
 \end{minipage}%
 \begin{minipage}{.5\textwidth}
   \centering
   \includegraphics[width=.95\linewidth]{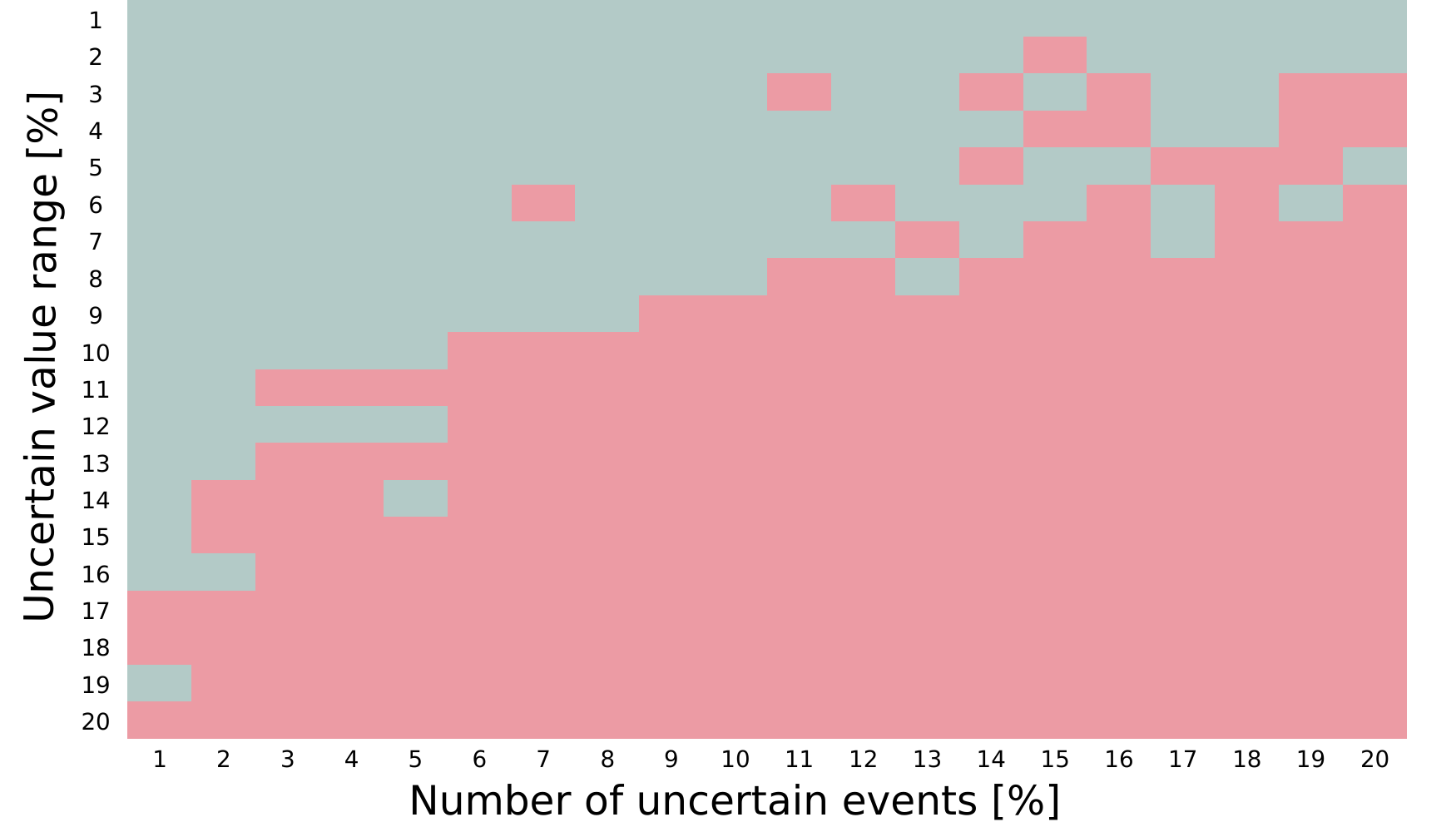}
 \end{minipage}
\nonfloatcaption{Monitoring \texttt{trip\_valid} for two traces with  1\% to 20\% of values uncertain (range of $\pm 1\%$ to $\pm 20\% $ around correct value). 
 Green: certain result; red: uncertain result.}
\vspace{-1em}
\end{minipage}

\subsection{ECG (symbolic), Full run for uncertain inputs}

\begin{minipage}{0.99\textwidth}
\centering
\includegraphics[width=0.95\linewidth]{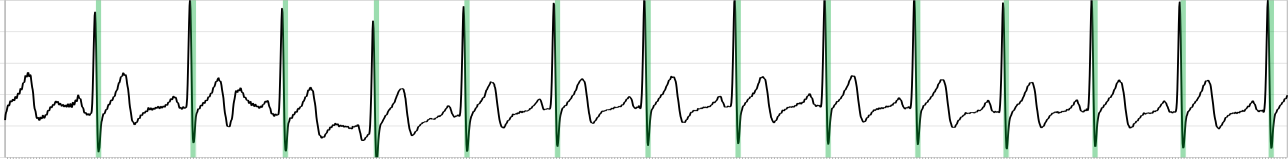}

\nonfloatcaption{ECG analysis 20\% of the values uncertain (range of $\pm 20\%$ around correct value). 
  Symbolic approach. 
  Green: heartbeat certainly detected; yellow: heartbeat possibly detected.}
\end{minipage}

\subsection{ECG (intervals), Full run for uncertain inputs}

\begin{minipage}{0.99\textwidth}
\centering
\includegraphics[width=0.95\linewidth]{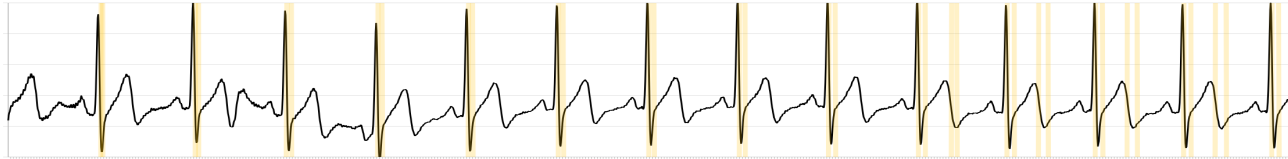}

\nonfloatcaption{ECG analysis 5\% of the values uncertain (range of $\pm 20\%$ around correct value).
  Interval approach.
  Green: heartbeat certainly detected; yellow: heartbeat possibly detected.}
\end{minipage}

\subsection{ECG (symbolic), Full run for 5 uncertainty bursts}

\begin{minipage}{0.99\textwidth}
\centering
\includegraphics[width=0.95\linewidth]{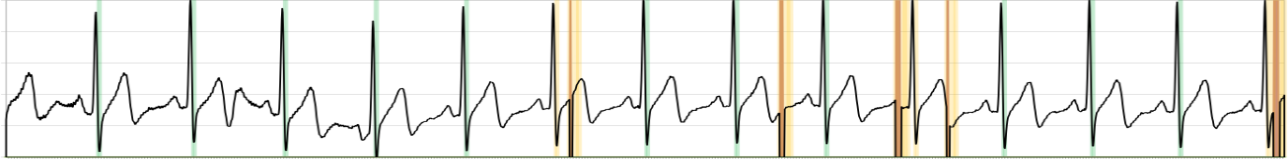}

\nonfloatcaption{ECG analysis with 5 bursts of fully uncertain values(5 to 20 in a row). 
 Symbolic approach.
 Orange: Burst of uncertain values; green: heartbeat certainly detected; yellow: heartbeat possibly detected.}
\end{minipage}

\subsection{ECG (intervals), Full run for 5 uncertainty bursts}

\begin{minipage}{0.99\textwidth}
\centering
\includegraphics[width=0.95\linewidth]{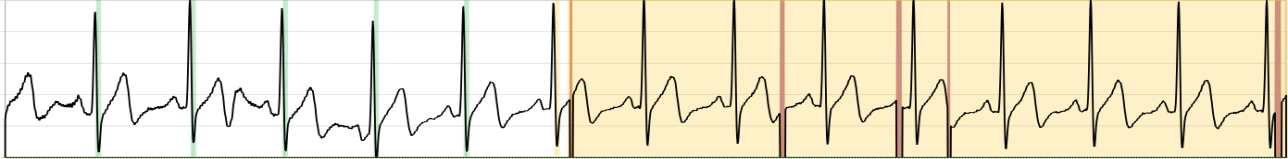}

\nonfloatcaption{ECG analysis with 5 bursts of fully uncertain values(5 to 20 in a row). 
 Interval approach.
 Orange: Burst of uncertain values; green: heartbeat certainly detected; yellow: heartbeat possibly detected.}
\end{minipage}

\subsection{ECG (symbolic), Full run for 5 uncertainty bursts (with assumption)}

\begin{minipage}{0.99\textwidth}
\centering
\includegraphics[width=0.95\linewidth]{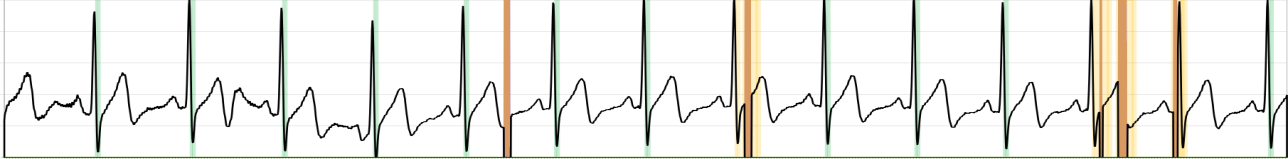}

\nonfloatcaption{ECG analysis with 5 bursts of fully uncertain values(5 to 20 in a row).
 Symbolic approach with assumption.
 Orange: Burst of uncertain values, green: heartbeat certainly detected, yellow: heartbeat possibly detected.}
\end{minipage}

\subsection{ECG (symbolic), Full run for 5 uncertainty bursts (with filter)}

\begin{minipage}{0.99\textwidth}
\centering
\includegraphics[width=0.95\linewidth]{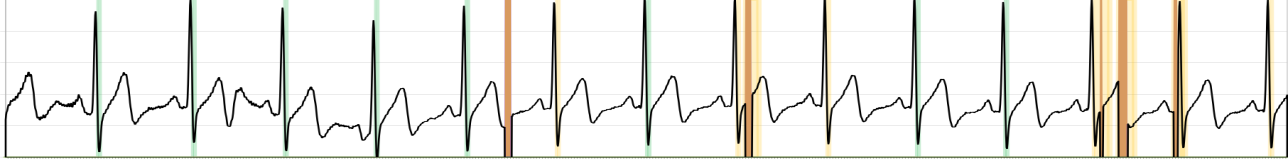}

\nonfloatcaption{ECG analysis with 5 bursts of fully uncertain values(5 to 20 in a row). 
  Symbolic approach with filter.
  Orange: Burst of uncertain values; green: heartbeat certainly detected; yellow: heartbeat possibly detected.}
\end{minipage}

\subsection{ECG (symbolic), Bursts: Comparison Assumptions and Filter}

\begin{minipage}{0.99\textwidth}
\centering
\begin{minipage}{.5\textwidth}
  \centering
  \includegraphics[width=0.95\linewidth]{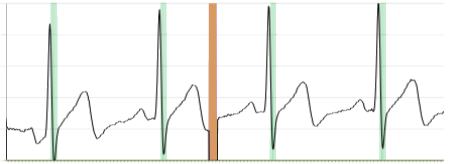}
\end{minipage}%
\begin{minipage}{.5\textwidth}
  \centering
  \includegraphics[width=0.95\linewidth]{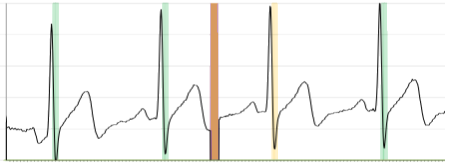}
\end{minipage}
\nonfloatcaption{ECG analysis with bursts. 
  Left: Usage of assumption. Right: Additional condition added to output stream.
  Orange: Burst of uncertain values; green: heartbeat certainly detected; yellow: heartbeat possibly detected.}
\label{fig:hr3}
\end{minipage}


\end{document}